\newcommand{\logrank}{\widetilde{\textrm{Logrank}}}
\theoremstyle{plain}
\newtheorem{theorem}{Theorem}[section]
\newtheorem{lemma}[theorem]{Lemma}
\newtheorem{corollary}[theorem]{Corollary}
\newtheorem{proposition}[theorem]{Proposition}
\newtheorem{conjecture}[theorem]{Conjecture}
\theoremstyle{definition}
\newtheorem{definition}[theorem]{Definition}
\theoremstyle{remark}
\newtheorem{remark}[theorem]{Remark}
\newcommand{\bool}{\{0, 1\}}
\def\squareforqed{\hbox{\rlap{$\sqcap$}$\sqcup$}}
\def\qed{\ifmmode\squareforqed\else{\unskip\nobreak\hfil
\penalty50\hskip1em\null\nobreak\hfil\squareforqed
\parfillskip=0pt\finalhyphendemerits=0\endgraf}\fi}
\newenvironment{proofof}[1]{\begin{trivlist}%
\item[]{\flushleft\em Proof of #1. }}
{\qed\end{trivlist}}
\def\disc{\textrm{disc}}
\def\g{\mathcal{G}}
\def\>{\ensuremath{\rangle}}
\def\<{\ensuremath{\langle}}
\def\disj{\ensuremath{\textsc{DISJ}}}
\def\dom{\ensuremath{\textrm{dom}}}
\def\disjP{\ensuremath{\textsc{DISJ}^{\le 1}}}
\newcommand{\appd}{\widetilde{\textrm{deg}}}
\newcommand{\IP}{\textsc{IP}}
\newcommand{\trace}{\mathrm{tr}}
\newcommand{\comments}[1]{}
\newcommand{\defeq}{\stackrel{\mathrm{def}}{=}}
\newcommand{\ham}{\textsc{Ham}}
\newcommand{\sepAuthor}{.3in}
\newcommand{\sepAbstract}{.3in}
\newcommand{\skipKeywords}{30pt}
\newcommand{\sepTitle}{2ex}
\long\def\mytitlepage#1#2#3#4{
        \thispagestyle{empty}
    \vspace*{\sepTitle}
        \begin{center}
        {\Large\bf #1}

        \vspace{\sepAuthor}
        #2\\
        \medskip

        \vspace{\sepAbstract}
        {\Large Abstract}
        \end{center}

        \noindent{#3}
        \vskip\skipKeywords

        \noindent{#4}
        \clearpage
        }
\begin{document}
\mytitlepage{Quantum communication complexity of block-composed
functions }
{{\large Yaoyun Shi\quad and \quad Yufan Zhu}\\
\vspace{1ex}
Department of Electrical and Computer Engineering\\
The University of Michigan\\
Ann Arbor, MI 48109-2122, USA\\
E-mail: \{shiyy$|$yufanzhu\}@eecs.umich.edu\\
%date
\vspace{2ex}
{\today}
}
{%abstract
A major open problem in communication complexity is whether 
or not quantum protocols can be exponentially more efficient than 
classical ones for computing a {\em total} Boolean function in the 
two-party interactive model. The answer appears to be ``No''. 
In 2002, Razborov proved this conjecture for so far 
the most general class of functions 
$F(x, y)=f_n(x_1\cdot  y_1, x_2\cdot  y_2, ..., x_n\cdot y_n)$, 
where $f_n$ is a {\em symmetric} Boolean function on $n$ Boolean inputs, 
and $x_i$, $y_i$ are the $i$'th bit of $x$ and $y$, respectively. 
His elegant proof critically depends on the symmetry of $f_n$.

We develop a lower-bound method that does not require symmetry
and prove the conjecture for a broader class of functions. 
Each of those functions $F(x, y)$ is the 
``block-composition'' of a ``building block'' 
$g_k : \{0, 1\}^k \times \{0, 1\}^k \rightarrow \{0, 1\}$, 
and an $f_n : \{0, 1\}^n \rightarrow \{0, 1\}$, 
such that $F(x, y) = f_n( g_k(x_1, y_1), g_k(x_2, y_2), ..., g_k(x_n, y_n) )$, 
where $x_i$ and $y_i$ are the $i$'th $k$-bit block of $x, y\in\{0, 1\}^{nk}$, 
respectively. 

We show that as long as $g_k$ itself is ``hard'' enough,
its block-composition with an {\em arbitrary} $f_n$ has polynomially related 
quantum and classical communication complexities. Our approach 
gives an alternative proof for Razborov's result (albeit with a
slightly weaker parameter), and establishes new quantum lower bounds. 
For example, when $g_k$ is the Inner Product function 
with  $k=\Omega(\log n)$, 
the {\em deterministic} communication complexity of its block-composition 
with {\em any} $f_n$ is asymptotically at most the quantum complexity to 
the power of $7$.
}{\noindent{\bf Keywords:} Communication complexity, quantum
information processing, polynomial approximation of Boolean
functions, quantum lower bound. }

\section{Introduction and summary of results}
Communication complexity studies the inherent communication cost for
distributive computations. Let $F: X\times Y\to\bool$ be a function
which two parties Alice, who knows $x\in X$, and Bob, who knows
$y\in Y$, wish to compute. The {\em communication complexity} of $F$
is the minimum amount of information they need to exchange to
compute $F$ on the worst case input. There are several variants of
communication complexity: each of which corresponds to different 
types of interactions allowed and whether or not small error 
probability is allowed. For example, we study the following three
variants in this paper: deterministic (denoted by $D(F)$), 
randomized (denoted by $R(F)$), and quantum (denoted by $Q(F)$). 
In the last two cases, the protocol may err with probability 
$\le 1/3$. Since its introduction by Yao~\cite{Yao:1979:comm} 
in 1979, the study of communication complexity has developed into a
major branch of complexity theory, with a wide range of
applications such as in VLSI design, time-space tradeoff,
derandomization, and circuit complexity. The monograph
by Kushilevitz and Nisan \cite{Kushilevitz:1997:book} 
surveys results up to 1997.

Quantum communication complexity was introduced by 
Yao \cite{Yao:1993:circuit} in 1993, and has been studied extensively
since then. 
A major problem in this area
is to identify problems that have an exponential gap between 
quantum and classical communication complexities, or to prove that 
such a problem does not exist.

Exponential gaps were indeed found for several communication tasks
(\cite{ASTS+98a, Raz99a, Bar-Yossef:2004:com, Gavinsky:2005:comm,
Kerenidis06, Gavinsky06}). However, those tasks are either
sampling, or computing a partially defined function or a relation.
An exponential gap is known for a total Boolean function (checking
equality), but in a restricted model that involves a third party
({\em Simultaneous Message Passing} model) \cite{BCWdW01a}. It remains
open to day if super-polynomial gaps are possible for computing a {\em total
Boolean function} in the more commonly studied model of two-party interactive
communication. This is one of the most significant problems in
quantum communication complexity, and is the question we address
in this article.

It is believed that the answer to the above question is ``No'':
\begin{conjecture}[Log-Equivalence Conjecture]\label{conj:equiv} For any total
Boolean function, the quantum (with shared entanglement) 
and randomized (with shared randomness) communication
complexities are polynomially related in the two-way interactive
model.
\end{conjecture}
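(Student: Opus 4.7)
The plan is not to settle Conjecture~\ref{conj:equiv} in full generality but to make progress on a structured subclass: functions $F$ admitting a block decomposition $F(x,y) = f_n(g_k(x_1,y_1),\ldots,g_k(x_n,y_n))$ for some ``hard'' inner gadget $g_k$ and an arbitrary outer $f_n:\bool^n\to\bool$. For such $F$ I would aim to show $D(F)\le \mathrm{poly}(Q(F),k)$, which vindicates the conjecture on this subclass and, crucially, drops the symmetry hypothesis present in Razborov's theorem.

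The strategy has two halves. First, a classical upper bound: solve $F$ by simulating an optimal deterministic decision tree for $f_n$ and spending $D(g_k)$ bits per query to evaluate each $g_k(x_i,y_i)$. This yields $D(F) \le D^{\mathrm{dt}}(f_n)\cdot D(g_k)$, and by Nisan--Szegedy-type relations between decision tree complexity and approximate degree, $D^{\mathrm{dt}}(f_n) \le \mathrm{poly}(\appd(f_n))$. Second, a quantum lower bound: show that whenever $g_k$ is ``hard enough'' one has $Q(F)\ge \Omega(\appd(f_n))$, up to factors polynomial in $k$. Chaining the two inequalities closes the loop.

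The quantum lower bound is the substantive step and the place where hardness of $g_k$ enters. The approach I would try is a \emph{lifting} theorem in the spirit of pattern-matrix arguments: take a dual polynomial witnessing $\appd(f_n)\ge d$ and compose it with a dual object certifying hardness of $g_k$ (for instance a witness of small discrepancy, or equivalently an appropriate spectral-gap bound on its $\pm 1$ communication matrix) to build a dual feasible solution for an SDP that lower-bounds $Q(F)$, such as the approximate trace norm of the communication matrix of $F$. Because no symmetry of $f_n$ is invoked, the dual polynomial only has to be handled coordinate-wise; the role of $g_k$ is to supply, at each coordinate, enough ``spectral room'' to absorb the corresponding factor in the tensor product without any symmetrization step.

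The main obstacle will be calibrating what ``hard enough'' means for $g_k$ and keeping the parameter loss in the lifting under control. One needs a single quantitative hypothesis on $g_k$ that (a) is satisfied by natural gadgets such as Inner Product with $k=\Omega(\log n)$, and (b) is strong enough that the product of coordinate-wise dual witnesses remains a legitimate dual witness for $F$ after normalization. I expect the dependence on $k$ to be the bottleneck: for Inner Product the relevant spectral gap is roughly $2^{-k/2}$, so the lift only succeeds once $k$ is large enough to absorb the $\ell_1$-mass of the dual polynomial, which is also why a polynomial blow-up (a constant exponent like $7$ rather than $2$) seems unavoidable along this route.
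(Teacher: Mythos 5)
Your proposal takes essentially the same route the paper does: extract a dual witness $q$ for $\appd(f_n)$ via LP duality (Lemma~\ref{lm:dual}), tensor it with distributions $\mu_0,\mu_1$ witnessing small spectral discrepancy of $g_k$ (Definition~\ref{def:spectradis}) to get a trace-norm witness $h$ for $f_n\Box g_k$, invoke the Razborov--Yao approximate-trace-norm bound, and close the loop with the decision-tree simulation plus the Nisan--Szegedy/Beals et al.\ degree relation. Your anticipated calibration ($\rho(\IP_k)\approx 2^{-k/2}$ forcing $k=\Omega(\log n)$, and a degree-$7$ blow-up stemming from the exponent $1/6$ in Theorem~\ref{thm:ns}) matches the Main Lemma's hypothesis $\rho(g_k)\le \appd(f_n)/(2en)$ and Corollary~\ref{col:ip} exactly.
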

Besides the lack of a natural candidate for a super-polynomial gap,
two other intuitions support this conjecture. The first 
relates to 
the approximate version of the well
known Log-Rank Conjecture, which states
that for any $F: X\times Y\to\bool$, $R(F)$
is polynomially related to
$\logrank(F)$, the logarithm of the smallest rank
of a real-valued matrix  $[\tilde F(x, y)]_{x, y}$
approximating $[F(x, y)]_{x\in X, y\in Y}$ entry-wise.
It is known that without sharing entanglement,
the quantum complexity of $F$ is at least $\frac{1}{2}\logrank(F)$.
Thus the Log-Equivalence Conjecture
follows from the Log-Rank Conjecture, unless
there exist exponential gaps between quantum protocols
with or without shared entanglement. The existence of such
gaps is also a fundamental open problem currently under
active investigations.

The second intuition supporting the Log-Equivalence Conjecture
is the fact that the similar conjecture is true for 
the closely related decision tree complexity.
Recall that a decision tree algorithm computes a function
$f_n:\{0, 1\}^n\to\{0, 1\}$ by making queries of the type
``what is the $i$'th bit of the input?''
The decision tree complexity of $f_n$ is the minimum number of 
queries required to compute $f_n$ correctly for any input. 
Making use earlier results of Nisan and Szegedy~\cite{NisanS92} 
and Paturi~\cite{Paturi}, Beals, Buhrman, Cleve, Mosca, and 
de~Wolf~\cite{Beals:2001:QLB} proved that the quantum and
the deterministic decision tree complexities are polynomially 
related. This is in sharp contrast with the exponential quantum 
speedups~\cite{Simon, Shor, Cleve} on {\em partial functions} 
achieved by the quantum algorithms of Simon's and Shor's.
%In particular, if $f_n$ is symmetric (i.e. $f_n(x)$
%depends only on the Hamming weight of $x$), then
%the quantum decision tree complexity is asymptotically the same
%as its approximation degree.

Razborov's work~\cite{Razborov} is a significant progress
for the Log-Equivalence Conjecture. He defined the following 
notion of {\em symmetric predicates}. Let 
$f_n:\{0, 1\}^n\to\{0, 1\}$ be a symmetric function, i.e., 
$f_n(x)$ depends only on the Hamming weight of $x$. A 
function $F:\{0, 1\}^n\times \{0, 1\}^n\to\{0, 1\}$ 
is called a {\em symmetric predicate} if 
$F(x, y)=f(x_1\wedge y_1, x_2\wedge y_2, \cdots, x_n\wedge y_n)$.
The \textsc{Disjointness} function $\disj_n$ is an important symmetric
predicate that has been widely studied:
\begin{equation*} \disj_n(x, y)\defeq\left\{\begin{array}{cc}1&\textrm{
$\exists i,\ x_i=y_i=1$,}\\ 0&\textrm{otherwise.}\end{array}\right.
\end{equation*}
\begin{theorem}[Razborov~\cite{Razborov}] For any symmetric
predicate $F:\{0, 1\}^n\times\{0, 1\}^n\to\{0, 1\}$,
$D(F)=O(\max\{Q(F)^2, Q(F)\log n\})$.
\end{theorem}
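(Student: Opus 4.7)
The content of this theorem is essentially a quantum lower bound $Q(F) = \Omega(\sqrt{D(F)})$ (with a logarithmic adjustment) for symmetric predicates, saying that quantum protocols give at most a quadratic speedup over classical ones in this regime. My plan is to establish it via Paturi's characterization of the approximate polynomial degree of symmetric Boolean functions, lifted to communication complexity through the trace-norm / generalized discrepancy method.

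Let $\ell(f_n)$ denote the smallest $\ell$ such that $f_n$ is constant on $\{\ell, \ell+1, \ldots, n - \ell\}$; this measures how peripheral the transitions of $f_n$ can be, with $\ell = 0$ exactly when $f_n$ is constant. Paturi's theorem gives $\appd(f_n) = \Theta(\sqrt{n \, \ell(f_n)})$. For the quantum lower bound I would apply the trace-norm method to the sign matrix $M_F = [(-1)^{F(x,y)}]_{x,y}$. By symmetry, $M_F$ depends on $(x,y)$ only through the counts $|x|$, $|y|$, and $|x \wedge y|$, so it decomposes into ``pattern'' blocks indexed by Hamming weights whose spectral behavior is governed by a single univariate polynomial encoding $f_n$. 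The approximate-trace-norm question thereby collapses to a one-dimensional approximate-degree question, and Paturi's theorem yields $Q(F) = \Omega(\sqrt{n \, \ell(f_n)})$.

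For the classical upper bound, the trivial protocol gives $D(F) \leq n + 1$; whenever $\ell(f_n) \geq 1$ this already implies $D(F) \leq n \leq n \, \ell(f_n) = O(Q(F)^2)$, while $\ell(f_n) = 0$ makes $F$ trivial. The $Q(F)\log n$ term accommodates the regime where the quadratic bound is slack, e.g.\ when $\ell(f_n)$ is so small that all transitions of $f_n$ sit within $O(\ell(f_n))$ of the endpoints; there a bucketed counting protocol, in which the parties communicate only the $O(\ell(f_n))$ coordinates relevant to the transitions, gives $D(F) = O(\ell(f_n) \log n) = O(Q(F) \log n)$. Combining the two cases yields $D(F) = O(\max\{Q(F)^2, Q(F) \log n\})$.

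The main obstacle is the quantum lower bound itself, specifically the spectral analysis underlying the trace-norm argument. Symmetry of $f_n$ is essential: it is what permits the reduction from a high-dimensional matrix-spectral question to a one-dimensional polynomial-approximation one, where Paturi's tight characterization is available. Without symmetry the Hamming-weight block decomposition breaks down, and the approximate degree of $f_n$ is no longer a single scalar invariant that controls $\|M_F\|_\trace$ — motivating the alternative framework developed in the body of this paper for block-composed functions.
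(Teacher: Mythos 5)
The paper does not prove this theorem; it is quoted as background from Razborov's work and the body of the paper only derives a weaker bound ($R(f_n\Box\wedge)=O(Q(f_n\Box\wedge)^3)$, Theorem~\ref{thm:symmetric}) by a different route. So your proposal must stand on its own, and there is a genuine error in it.

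Your central claim is that the trace-norm / polynomial argument yields $Q(F)=\Omega(\sqrt{n\,\ell(f_n)})$, where $\ell(f_n)=\max\{\ell_0(f_n),\ell_1(f_n)\}$. This is false when $\ell_1>\ell_0$, and the breakdown is exactly where the theorem's $Q\log n$ term comes from. Take $f_n = \textsc{AND}_n$ (equivalently any symmetric $f_n$ with $\ell_0=0$ and $\ell_1=1$). Then $\appd(f_n)=\Theta(\sqrt{n})$ is correct, and your formula would give $Q(F)=\Omega(\sqrt{n})$. But $F(x,y)=\textsc{AND}_n(x\wedge y)$ is the equality-with-$1^n$ test and has $D(F)=O(1)$, hence $Q(F)=O(1)$. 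More generally, for $\ell_0=0$ and $\ell_1$ small, the protocol you sketch (exchange the $O(\ell_1)$ relevant coordinates) gives $D(F)=O(\ell_1\log n)$, which is far below $\sqrt{n\ell_1}$. What Razborov actually proves is $Q(F)=\Omega(\sqrt{n\,\ell_0(f_n)}+\ell_1(f_n))$ (Equation~\ref{eqn_Rz} in this paper): the $\sqrt{n\cdot{}}$ scaling attaches only to the low-Hamming-weight transition $\ell_0$, while the high-end transition $\ell_1$ contributes only linearly. The reason is that $|x\wedge y|$ close to $n$ forces both $|x|$ and $|y|$ to be close to $n$, so the effective input space shrinks and the Hamming-weight block decomposition cannot be padded to full dimension; the collapse to a univariate degree question with an $\sqrt{n\cdot{}}$ boost simply does not happen there. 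Your ``single univariate polynomial'' picture silently assumes the two regimes behave identically, and that assumption is wrong.

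This error propagates into the upper-bound side. Your case split reads ``if $\ell\ge 1$ then $D\le n\le n\ell=O(Q^2)$,'' which uses the false bound. With the correct lower bound, the deduction $D\le n=O(Q^2)$ only goes through when $\ell_0\ge 1$. When $\ell_0=0$ and $\ell_1\ge 1$ you only have $Q=\Omega(\ell_1)$, and $D\le n$ is not bounded by $Q^2=\ell_1^2$; you must instead use the protocol sending $O(\ell_1)$ coordinates, each costing $\log n$ bits, to get $D=O(\ell_1\log n)=O(Q\log n)$. So the theorem is true and your two-term conclusion is the right shape, but the dichotomy you wrote does not actually follow from your stated lower bound. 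To repair the proposal you need (i) the correct asymmetric lower bound $Q(F)=\Omega(\sqrt{n\ell_0}+\ell_1)$, whose proof requires restricting to a specific weight slice and analyzing the Hahn-polynomial spectra of the combinatorial matrices $J_{n,p,s}$ (not a black-box lift of $\appd$), and (ii) the case split $\ell_0\ge 1$ versus $\ell_0=0$, matching the two terms of the lower bound to the trivial $n$-bit protocol and the $O(\ell_1\log n)$ protocol respectively.
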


Combined with the $O(d\log d)$-bit classical protocol
of Huang et al.~\cite{Hamming} for deciding if $x, y\in\{0, 1\}^n$
has Hamming distance $|x\oplus y|\ge d$, Razborov's lower
bound implies the following.

\begin{proposition}\label{prop:quadGap}For any symmetric predicate $F:\{0, 1\}^n\times
\{0, 1\}^n\to\{0, 1\}$, $R(F)=O((Q(F))^2)$.
\end{proposition}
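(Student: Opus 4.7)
My plan is to prove Proposition \ref{prop:quadGap} by a case analysis on the size of $Q(F)$ relative to $\log n$, invoking Razborov's theorem directly in the large-$Q(F)$ case and supplementing it with the Huang et al.\ Hamming-distance protocol in the small-$Q(F)$ case.

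In the first case, when $Q(F) \geq \log n$, Razborov's theorem immediately gives $D(F) = O(\max\{Q(F)^2, Q(F)\log n\}) = O(Q(F)^2)$, and so $R(F) \leq D(F) = O(Q(F)^2)$. No randomization is needed here.

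In the second case, $Q(F) < \log n$, I plan to exploit the structural consequences of Razborov's quantum lower bound. For symmetric predicates, that lower bound forces a small $Q(F)$ to correspond to a highly restricted $f_n$: the transitions of $f_n$ (as a function of Hamming weight) must all lie in a window of total size $O(Q(F)^2)$ near the endpoints $0$ and $n$, with $f_n$ constant in between. Hence evaluating $F(x, y) = f_n(|x \wedge y|)$ reduces to checking a handful of thresholds on $|x \wedge y|$. Via the identity $|x \wedge y| = (|x| + |y| - |x \oplus y|)/2$, these become thresholds on $|x \oplus y|$ given $|x|, |y|$, and the Huang et al.\ protocol with parameter $d = O(Q(F)^2)$ resolves each in $O(d\log d)$ randomized bits, which folds into $O(Q(F)^2)$ after exchanging $|x|$ and $|y|$ (the latter being affordable since, when $\log n$ would threaten to exceed $Q(F)^2$, $n$ itself is small enough that $R(F) \leq n$ is a trivial $O(1)$ bound).

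The main obstacle I anticipate is the small-$Q(F)$ case: one must (a) extract the precise structural constraint on the transition points of $f_n$ from Razborov's quantum lower bound, and (b) carefully choose the Huang et al.\ threshold $d$ and any auxiliary information (such as $|x|, |y|$, or randomized sketches thereof) so that the total randomized communication is $O(Q(F)^2)$ and not a $\log n$ or $\log Q$ factor worse — precisely the gap that Razborov's deterministic bound $D(F) = O(\max\{Q(F)^2, Q(F)\log n\})$ leaves open and that the randomized Huang et al.\ protocol is meant to close.
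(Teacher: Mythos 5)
Your proposal follows essentially the same route as the paper: Proposition~\ref{prop:quadGap} is proved there by combining Razborov's lower bound $Q(F)=\Omega(\sqrt{n\ell_0(f_n)}+\ell_1(f_n))$ (Eqn.~(\ref{eqn_Rz})) with Proposition~\ref{prop:rupper}, whose proof is precisely the Huang et al.\ Hamming-distance protocol you describe. The paper splits on $\ell_0=0$ versus $\ell_0\ge 1$ rather than on $Q(F)$ versus $\log n$, but these are equivalent up to constants: $\ell_0\ge 1$ already forces $Q(F)=\Omega(\sqrt n)$ while $R(F)\le D(F)\le n+1$, which is the same trivial case you handle via Razborov's deterministic bound.

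One quantitative point in your sketch needs correcting, because taken literally it overshoots. When $Q(F)$ is small, Razborov's lower bound gives $\ell_0=0$ (for $n$ past a constant) and $\ell_1=O(Q(F))$ --- a transition window of size $O(Q(F))$, not $O(Q(F)^2)$. This matters: running Huang et al.\ with parameter $d=\Theta(Q(F)^2)$ already costs $\Theta(Q(F)^2\log Q(F))$ bits before the binary search and the error amplification, which is not $O(Q(F)^2)$. You must use the tighter $d=O(\ell_1)=O(Q(F))$; then a binary search over $O(\log\ell_1)$ thresholds, each run $O(\log\log\ell_1)$ times to drive the per-step error down, gives $R(F)=O(\ell_1\log^2\ell_1\log\log\ell_1)=O(Q(F)^2)$, exactly as in Proposition~\ref{prop:rupper}. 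Also, you need not worry about the cost of exchanging $|x|$ and $|y|$: after an $O(1)$-bit check that $n-|x|<\ell_1$ and $n-|y|<\ell_1$ (otherwise $|x\wedge y|\le n-\ell_1$ lies below the window and $f_n$ is constant there), Alice sends only $n-|x|$, which fits in $O(\log\ell_1)=O(\log Q(F))$ bits, so no $\log n$ term ever enters.
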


This bound is tight on $\disj_n$, which admits
the largest known quantum-classical gap 
for total Boolean functions. The class of symmetric
predicates is also the most general class of functions on 
which the Log-Equivalence Conjecture is known to hold.

Notice that Razborov's lower bound method relies on the 
{\em symmetry} of $f_n$. Thus we aim to develop 
lower-bound techniques for an arbitrary $f_n$,
and to derive new quantum lower bounds.
To this end, we consider the following
class of functions.

\begin{definition} Let $k, n\ge 1$ be integers. Given
$f_n:\{0, 1\}^n\to \{0, 1\}$, and 
$g_k:\{0, 1\}^k\times \{0, 1\}^k\to\{0, 1\}$, 
the {\em block-composition} of $f_n$ and $g_k$ is the function
$f_n\Box g_k : \{0, 1\}^{nk}\times \{0, 1\}^{nk}\to\{0, 1\}$ such
that on $x, y\in \{0, 1\}^{nk}$, with $x=x_1x_2\cdots x_n$,
and $y=y_1y_2\cdots y_n$, where $x_i, y_i\in \{0, 1\}^k$,
\[ f_n\Box g_k (x, y) = f_n(g_k(x_1, y_1), g_k(x_2, y_2), \cdots, g_k(x_n, y_n)).\]
\end{definition}

Note that a symmetric predicate based on a symmetric $f_n:\bool^n\to\bool$
is the block composition $f_n\Box \wedge$, where $\wedge$ denotes
the binary AND function.
In our Main Lemma, stated and proved in Section~\ref{sec:mainlemma},
we derive a sufficient condition for 
$Q(f_n\Box g_k)$ to have a strong lower bound. 
An application of this Main Lemma
is the following.

\begin{theorem}[Informally]\label{res:hardg:informal} For any integer $n\ge1$
and any function $f_n: \{0, 1\}^n\to\{0, 1\}$, the block
composition of $f_n$ with a $g_k:\{0, 1\}^k\to\{0, 1\}$
has polynomially related quantum and randomized communication complexities,
if $g_k$ is sufficiently hard.
\end{theorem}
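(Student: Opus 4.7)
The plan is to sandwich $R(f_n \Box g_k)$ between a classical upper bound and the quantum lower bound that the Main Lemma will supply, then invoke the ``sufficient hardness'' of $g_k$ to identify the two up to a polynomial factor.

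For the upper bound I would first establish
\[ R(f_n \Box g_k) \;=\; \widetilde{O}\bigl(D^{dt}(f_n)\cdot R(g_k)\bigr), \]
where $D^{dt}(f_n)$ denotes the deterministic decision-tree complexity of $f_n$. The construction is the standard simulation: Alice and Bob walk through an optimal decision tree for $f_n$, and each query ``what is $g_k(x_i,y_i)$?'' is answered by running the best randomized protocol for $g_k$ with its error reduced to $1/(3D^{dt}(f_n))$ by majority amplification, at an $O(\log D^{dt}(f_n))$ overhead. A union bound over the at most $D^{dt}(f_n)$ queries made on any input keeps the overall error below $1/3$. Invoking the Beals--Buhrman--Cleve--Mosca--de~Wolf relation $D^{dt}(f_n) = O(\appd(f_n)^6)$ upgrades this to
\[ R(f_n\Box g_k) \;=\; \widetilde O\bigl(\appd(f_n)^6\, R(g_k)\bigr). \]

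Next I would apply the Main Lemma to extract a quantum lower bound of the shape $Q(f_n\Box g_k) = \Omega(\appd(f_n)\cdot \mu(g_k))$, where $\mu(g_k)$ is the intrinsic hardness measure on the building block $g_k$ featured in the Main Lemma (typically a normalized discrepancy or trace-norm quantity tailored to Razborov's matrix method). The informal ``$g_k$ is sufficiently hard'' hypothesis is to be interpreted quantitatively as $R(g_k) = \mathrm{poly}(\mu(g_k))$, i.e.\ the hardness of $g_k$ witnessed by $\mu$ is classically certifiable up to a polynomial loss. Substituting this into the upper bound yields
\[ R(f_n\Box g_k) \;=\; \widetilde O\bigl(\appd(f_n)^6\cdot \mathrm{poly}(\mu(g_k))\bigr) \;=\; \mathrm{poly}\bigl(\appd(f_n)\cdot \mu(g_k)\bigr) \;=\; \mathrm{poly}\bigl(Q(f_n\Box g_k)\bigr), \]
which is the claimed polynomial equivalence.

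The main obstacle is the Main Lemma itself. Razborov's proof exploited the symmetry of $f_n$ to reduce to a one-dimensional univariate approximation problem, whereas here $f_n$ is arbitrary, so one must lift a full $n$-variate dual polynomial certifying $\appd(f_n)$ to a matrix witness for $f_n\Box g_k$ and then bound its trace norm using spectral structure of $g_k$. Controlling this lift so that no ``bad'' spectral modes of $g_k$ are amplified is the core technical issue; it dictates both the precise form of the hardness measure $\mu(g_k)$ and the final exponent in the polynomial relation (the $7$ cited in the abstract for $g_k=\IP$). A concrete subsidiary task is to verify the $\mu$-condition for natural candidates such as Inner Product on $\Omega(\log n)$ bits, which is what ultimately pins down the exponent.
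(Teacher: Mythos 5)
Your plan follows the paper's architecture: a decision-tree simulation gives the classical upper bound $R(f_n\Box g_k) = O(R(g_k)\,\appd^6(f_n)\log\appd(f_n))$ (the paper's Proposition~\ref{prop:upper}), and the Main Lemma supplies a quantum lower bound in terms of $\appd(f_n)$; ``sufficiently hard'' is then a hypothesis on $g_k$ that lets the two meet up to a polynomial. That much is the same as the paper.

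Where it does not quite close is the shape you ascribe to the Main Lemma's conclusion. You posit $Q(f_n\Box g_k)=\Omega(\appd(f_n)\cdot\mu(g_k))$, but the Main Lemma yields only $Q(f_n\Box g_k)=\Omega(\appd(f_n))$, under the precondition $\rho(g_k)\le \appd(f_n)/(2en)$ on the spectral discrepancy; no multiplicative $g_k$-hardness factor appears in its conclusion. Without that product, your chain $R(f_n\Box g_k)=\widetilde O(\appd^6(f_n)\cdot\mathrm{poly}(\mu(g_k)))=\mathrm{poly}(Q(f_n\Box g_k))$ has a hole: nothing you wrote bounds the $R(g_k)$ (or $\mu(g_k)$) factor by a polynomial in $Q(f_n\Box g_k)$. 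The paper fills this with a trivial but necessary embedding step: for non-constant $f_n$, fixing all blocks but one reduces $f_n\Box g_k$ to $g_k$, so $Q(f_n\Box g_k)\ge Q(g_k)$. With the paper's formal reading of ``sufficiently hard'' --- (i) $\rho(g_k)\le 1/(2en)$ so the Main Lemma applies for every $f_n$, and (ii) $R(g_k)=\mathrm{poly}(Q(g_k))$ --- one obtains
\[
R(f_n\Box g_k) = O\bigl(R(g_k)\,\appd^6(f_n)\log\appd(f_n)\bigr)
= \mathrm{poly}\bigl(Q(g_k)\bigr)\cdot\mathrm{poly}\bigl(\appd(f_n)\bigr)
= \mathrm{poly}\bigl(Q(f_n\Box g_k)\bigr),
\]
using $Q(f_n\Box g_k)=\Omega(\appd(f_n)+Q(g_k))$. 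The fix to your proposal is to drop the product form, state the Main Lemma's lower bound as $\Omega(\appd(f_n))$, and add the embedding inequality $Q(f_n\Box g_k)\ge Q(g_k)$.
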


We will define what ``sufficiently hard'' means precisely. Roughly,
it means that $Q(g_k)$ and $R(g_k)$ are polynomially related, and 
some type of discrepancy parameter (Definition \ref{def:spectradis}) on $g_k$ is sufficiently small.
We state below an incarnation of the above theorem.
Let $\textsc{IP}_k:\{0, 1\}^k\times \{0, 1\}^k\to\{0, 1\}$ be the 
widely studied Inner Product function
\[ \textsc{IP}_k(x, y) \defeq \sum_i x_iy_i \mod 2,\quad\forall x, y\in\{0, 1\}^k.\]

\begin{corollary}\label{col:ip}
For any integers $k$ and $n$ with $k\ge 2\log_2 n+5$, and for an {\em arbitrary}
$f_n:\{0, 1\}^n\to\{0, 1\}$,
$D(f_n\Box \textsc{IP}_k)=O((Q(f_n\Box \textsc{IP}_k))^7)$.
\end{corollary}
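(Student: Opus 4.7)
The plan is to prove the corollary by sandwiching the deterministic and quantum complexities of $f_n\Box \IP_k$ so that they match up to the seventh power. The lower bound on $Q$ will come from the Main Lemma applied with $g_k=\IP_k$, while the matching upper bound on $D$ comes from a straightforward decision-tree simulation.

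First I would verify the ``sufficiently hard'' hypothesis of the Main Lemma for $g_k=\IP_k$. The polynomial-equivalence condition is immediate since $Q(\IP_k)=R(\IP_k)=\Theta(k)$. The real content is the spectral-discrepancy condition (Definition~\ref{def:spectradis}); for $\IP_k$ the standard discrepancy is $O(2^{-k/2})$, so the assumption $k\ge 2\log_2 n+5$ forces it below $1/(c\cdot n)$ for an appropriate constant $c$. Plugging $\IP_k$ into the Main Lemma would then yield a quantum lower bound of the shape
\[
Q(f_n\Box \IP_k)\;=\;\Omega\bigl(\appd(f_n)\bigr),
\]
where $\appd(f_n)$ denotes the $1/3$-approximate polynomial degree of $f_n$.

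For the matching upper bound, Alice and Bob simulate an optimal decision tree for $f_n$: each query of the $i$-th input bit to $f_n$ is answered by having Alice send $x_i\in\{0,1\}^k$, after which both parties locally compute $\IP_k(x_i,y_i)$. This costs $D(f_n\Box \IP_k)\le k\cdot D_{\mathrm{dt}}(f_n)=O(k\cdot \appd(f_n)^6)$, using the chain $D_{\mathrm{dt}}(f_n)=O(\deg(f_n)^3)$ (Midrijanis) and $\deg(f_n)=O(\appd(f_n)^2)$ (Nisan--Szegedy). Combined with the quantum lower bound this gives $D(f_n\Box \IP_k)=O\bigl(k\cdot Q(f_n\Box \IP_k)^6\bigr)$, and since $Q(f_n\Box \IP_k)\ge Q(\IP_k)=\Omega(k)$ whenever $f_n$ is non-constant (the constant case being trivial), the extra factor of $k$ is absorbed into one more power of $Q$ to produce the seventh-power bound.

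The main obstacle will be the spectral-discrepancy step: translating the clean hypothesis $k\ge 2\log_2 n+5$ into the exact inequality the Main Lemma demands on the parameter of Definition~\ref{def:spectradis}, and tracking any hidden constants carefully so that the ``$+5$'' slack actually suffices. The remaining ingredients are just a composition of well-known polynomial reductions between approximate degree, exact degree, and deterministic decision-tree complexity, together with the near-trivial observation that $\IP_k$ itself can be evaluated deterministically by transmitting one side.
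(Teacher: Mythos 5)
The overall architecture of your proposal matches the paper's proof: lower-bound $Q(f_n\Box\IP_k)$ by $\Omega(\appd(f_n))$ via the Main Lemma, lower-bound it by $\Omega(k)$ via the known $\IP_k$ lower bound, upper-bound $D(f_n\Box\IP_k)$ by $k\cdot T(f_n)=O(k\cdot\appd(f_n)^6)$, and absorb the stray factor of $k$ into one extra power of $Q$. Your route to $T(f_n)=O(\appd(f_n)^6)$ via Midrijanis and Nisan--Szegedy is equivalent to the paper's direct appeal to Theorem~\ref{thm:ns}; nothing substantive differs there. (The remark about $Q(\IP_k)$ and $R(\IP_k)$ being polynomially related is a red herring for this corollary: that hypothesis feeds Theorem~\ref{res:hardg}, but the actual proof of Corollary~\ref{col:ip} uses $Q(\IP_k)=\Omega(k)$ directly.)

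The genuine gap is the spectral-discrepancy step, and it is not just a matter of ``tracking hidden constants.'' You argue that the \emph{standard} discrepancy of $\IP_k$ is $O(2^{-k/2})$ and infer that the spectral discrepancy $\rho(\IP_k)$ is small enough. But the paper shows only $\disc(g_k)\le\rho(g_k)$; the inequality goes the wrong way for your inference. A small standard discrepancy tells you nothing a priori about $\rho$, because Definition~\ref{def:spectradis} demands the existence of specific rectangles $I_A,I_B$ and $b$-distributions $\mu_0,\mu_1$ with small operator norms of $\tfrac{\mu_0\pm\mu_1}{2}$ (suitably normalized). The proof therefore has to actually exhibit such a construction. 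The paper takes $I_A=\bool^k\setminus\{0^k\}$, $I_B=\bool^k$, and $\mu_b$ uniform on $\{(x,y): \IP_k(x,y)=b,\,x\ne 0\}$; a direct computation then gives $\|\tfrac{\mu_0+\mu_1}{2}\|=\tfrac{1}{\sqrt{K(K-1)}}$ and $\|\tfrac{\mu_0-\mu_1}{2}\|=\tfrac{1}{(K-1)\sqrt{K}}$ with $K=2^k$, hence $\rho(\IP_k)\le\tfrac{1}{\sqrt{K-1}}$. Only after this concrete estimate does the arithmetic $k\ge 2\log_2 n+5>\log_2(4e^2n^2+1)$ yield $\rho(\IP_k)\le\tfrac{1}{2en}\le\tfrac{\appd(f_n)}{2en}$, which is what the Main Lemma needs. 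Your proposal must supply this construction (or an equivalent one) rather than appeal to the ordinary discrepancy bound.
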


The above corollary also holds for a random $g_k$ with high probability.
Our technique can also be applied to symmetric predicates,
thus giving an alternative proof to Razborov's result, albeit with a
weaker parameter.

\begin{theorem}\label{thm:symmetric} 
For any symmetric $f_n:\{0, 1\}^n\to\{0, 1\}$,
$R(f_n\Box\wedge)=O((Q(f_n\Box\wedge))^3)$.
\end{theorem}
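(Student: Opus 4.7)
The plan is to bound both $R(f_n \Box \wedge)$ and $Q(f_n \Box \wedge)$ in terms of a common combinatorial parameter of the symmetric function $f_n$, and then combine. Write $f_n(z) = h(|z|)$ for some $h : \{0, 1, \ldots, n\} \to \{0, 1\}$, and let $d = d(f_n)$ be the smallest integer such that every transition of $h$ lies within distance $d$ of either $0$ or $n$. By Paturi's tight characterization, $\appd(f_n) = \Theta(\sqrt{n \cdot d})$, and $d$ is the natural ``hardness dial'' that controls both $R$ and $Q$.

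For the classical upper bound, note that $(f_n \Box \wedge)(x, y)$ depends only on $|x \wedge y| = (|x| + |y| - |x \oplus y|)/2$, and that its value is constant outside the boundary regime where $|x \wedge y| \le d$ or $|x \wedge y| \ge n - d$. Two invocations of the $O(d \log d)$-bit public-coin Hamming-distance protocol of Huang et al.~\cite{Hamming}---once on $(x, y)$ to resolve the low boundary, once on $(x, \bar y)$ to resolve the high boundary---yield $R(f_n \Box \wedge) = O(d \log d)$.

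For the quantum lower bound, I instantiate the Main Lemma of Section~\ref{sec:mainlemma} with the degenerate choice $g_k = \wedge$, so $k = 1$. The Main Lemma connects $Q(f_n \Box g_k)$ from below to an approximation-theoretic complexity of $f_n$, modulated by a spectral-discrepancy parameter of $g_k$. Although the binary AND has only modest discrepancy, the quantitative accounting still produces a bound of the form $Q(f_n \Box \wedge) = \Omega((n d)^{1/3})$. Combining with the classical bound then gives $R(f_n \Box \wedge) = O(d \log d) = O(Q(f_n \Box \wedge)^3)$ after absorbing a logarithmic factor using the trivial bound $Q \le n$.

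The main obstacle is the quantum lower bound step. The Main Lemma is formulated under a ``sufficiently hard'' hypothesis on $g_k$ that the binary AND manifestly fails to satisfy, so one must reopen the proof and verify that its core polynomial-approximation argument still goes through for $\wedge$, at the cost of a cubic---rather than Razborov's quadratic---relation between $Q(f_n \Box \wedge)$ and $\appd(f_n)$. Carrying out this weakening of the hypothesis on $g_k$, and checking that the resulting quantitative loss is exactly a factor of $3/2$ in the exponent, is the principal technical content required to derive Theorem~\ref{thm:symmetric} from the machinery of Section~\ref{sec:mainlemma}.
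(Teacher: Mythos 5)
Your classical upper bound $R(f_n\Box\wedge)=O(d\log d)$ is false, and the flaw reveals that the single parameter $d$ you introduce cannot carry the argument. Taking $f_n=\textsc{OR}_n$ gives $f_n\Box\wedge=\disj_n$; in your notation $d=1$ (the unique transition of $h$ is at Hamming weight $1$), yet $R(\disj_n)=\Theta(n)$. The asymmetry between $\ell_0$ and $\ell_1$ is essential and cannot be collapsed: a transition near weight $n$ is classically cheap because both players can cheaply verify that $|x|$ and $|y|$ are within $\ell_1$ of $n$, which forces $|x\oplus y|<2\ell_1$ and makes the Hamming-distance binary search affordable. A transition near weight $0$ gives no such leverage, because $|x\wedge y|\le\ell_0$ is compatible with $|x\oplus y|=\Theta(n)$, and indeed deciding $|x\wedge y|\le\ell_0$ is as hard as disjointness. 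Running the Hamming protocol on $(x,\bar y)$ does not help, since $|x\oplus\bar y|=n-|x\oplus y|$ carries the same information. The paper (Proposition~\ref{prop:rupper}) only proves a nontrivial classical upper bound in the case $\ell_0=0$; when $\ell_0\ge1$ it falls back on the trivial bound $R\le D\le n$, and compensates with the quantum lower bound $Q=\Omega(n^{1/3}\ell_0^{2/3})\ge\Omega(n^{1/3})$, which already forces $R=O(Q^3)$.

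Your quantum lower bound route is also not what works. As you note, $\rho(\wedge)=\Omega(1)$, so the Main Lemma's hypothesis $\rho(g_k)\le\appd(f_n)/(2en)$ fails badly for $g_k=\wedge$ with $k=1$, and the paper explicitly observes (Remark~\ref{rem}, after Proposition~\ref{prop:rupper}) that its own spectral-discrepancy estimate $\rho(\disjP_k)=O(1/k)$ applied directly would only yield $Q(\disj_n)=\Omega(\log n)$. There is no ``quantitative loss by a factor of $3/2$'' to be extracted from reopening the Main Lemma for $\wedge$; the argument simply does not go through. What the paper actually does (Theorem~\ref{thm:qlower}) is a reduction: it chooses an intermediate block size $k\approx 6en'/\appd(f_{n'})$, applies the Main Lemma to $f_{n'}\Box\disjP_k$ where $\disjP_k$ is disjointness restricted to $p$-subsets with intersection $\le1$ and $p=k/3$ (using Knuth's eigenvalue formula to bound $\rho(\disjP_k)\le3/k$), and then embeds an instance of $f_{n'}\Box\disjP_k$ inside $f_n\Box\wedge$ by padding. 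The cubic (rather than quadratic) relation comes from the $n'k\le n$ constraint in this embedding, not from a weakening of the Main Lemma's hypothesis. So both halves of your plan need to be replaced; the case split on whether $\ell_0\ge1$ and the reduction through $\disjP_k$ are the two ideas you are missing.
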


Our approach is inspired by how the Log-Equivalence result
in decision tree complexity was proved: 
for any $f_n:\bool^n\to\bool$, both
the quantum and the deterministic decision tree complexities
were shown~\cite{NisanS92, Beals:2001:QLB}
to be polynomially related to the 
{\em approximate polynomial degree} (or, approximate degree for short) $\appd(f_n)$,
which is the smallest 
degree of a real polynomial that approximate
$f_n$ to be within $1/3$ on any $0/1$ inputs.
In our Main Lemma, we derive a sufficient condition on 
$n$ and $k$, and $g_k$ such
that $Q(f_n\Box g_k)=\Omega(\appd(f_n))$, for any $f_n$.
The randomized upper bound is obtained
by simulating a decision tree algorithm
for $f_n$, and whenever one input bit of $f_n$ is needed,
the protocol calls a sub-protocol for computing $g_k$
on the corresponding block. Under some hardness assumption
on $g_k$, those upper and lower bounds are polynomially related.

The approach for proving a quantum lower bound 
using an approximate degree lower bound is termed the {\em polynomial method}
in the literature of quantum decision trees.
Razborov's lower bound on $\disj$ can be viewed
an application of the polynomial method as well. This is because,
he showed that if there is a $q$-qubit protocol for
$\disj_n$, then there is a $O(q)$-degree polynomial
approximating $\textsc{OR}_n$. Thus the quantum lower bound
of $\Omega(\sqrt{n})$ follows from the same lower bound
on $\appd(\textsc{OR}_n)$ due to Nisan and Szegedy~\cite{NisanS92}
and Paturi~\cite{Paturi}.
We emphasize this connection of approximating polynomial
and quantum protocol is not obvious at all and it makes use the symmetry
of $\disj$ critically.

We avoid the dependence of Razborov's proof on the symmetry 
property of $f_n$ by taking the {\em dual} approach of 
the polynomial method. We show that from the linear programming
formulation of polynomial approximation, we can obtain
a ``witness'' for $f_n$ requiring a high approximate degree.
This witness is then turned into a ``witness'' for the 
hardness of $f_n\Box g_k$, under certain assumptions.
While the approximate polynomial degree has been used to 
prove lower bounds, and its dual formulation 
has been known to several researchers~\cite{RazborovP, Szegedy},
our application of the dual form appears to be the first 
demonstration of its usefulness in proving new results.
We note that there are several recent works that use
the duality of linear (or semidefinite) programming
in the context of communication complexity
~\cite{LinialS07, Sherstov07, LeeSS08, LinialS08}.
Those applications of duality, however, do not involve
the type of polynomial approximation of Boolean functions
considered here.

Before we proceed to the proofs, we briefly review
some other closely related works. Buhrman and de Wolf
\cite{Buhrman:2000:com} are probably the first to systematically
study the relationship of polynomial representations and
communication complexity. However, their result applies to error-free
quantum protocols, while we consider bounded-error case.
Klauck \cite{Klauck01} proved strong lower bounds for some
symmetric predicates such as MAJORITY based on the properties of
their Fourier coefficients. The same author
formulated a lower bound framework that includes several
known lower bound methods~\cite{Klauck:rectangle}. It would be
interesting to investigate the limitations of our polynomial
method in this framework. 
After preparing this draft, we learned about an independent work
by Sherstov~\cite{Sherstov}, who used a similar approach
to prove similar results. We will compare our work with his
in the concluding section.

\section{Preliminaries}
\subsection{Communication complexities and quantum lower bound 
by approximate trace norm}
Denote the domain of a function by $\dom(\cdot)$.
For a positive integer $n$, denote 
by $\mathcal{F}_n\defeq\{f_n: \bool^n\to\bool\}$,
and by $\mathcal{G}_k\defeq\{ g_k: \dom(g_k)\to\bool,\ \dom(g_k)\subseteq\bool^k\times\bool^k\}$.
For the rest of this article $f_n\in\mathcal{F}_n$
and $g_k\in\mathcal{G}_k$, for some integers $n, k\ge1$.

If $F\in\mathcal{G}_n$ is a total function, we also 
denote by $F$ the $\{0, 1\}^{2^n\times 2^n}$ 
matrix $[F(x, y)]_{x, y\in\bool^n}$.
Consider the computation of $F\in\mathcal{G}_n$ on $(x, y)\in\dom(F)$
when the input $x$ is known to a party Alice and $y$ 
is known to another party Bob. 
Unless $F(x, y)$ trivially depends only on $x$ or $y$, 
Alice and Bob will have to communicate before they could 
determine $F(x, y)$. The worst case cost of communication is called 
the communication complexity
of $F$.

The communication complexity depends on the information
processing power of Alice and Bob, and the requirement
on the accuracy of the outcome of a protocol.
Thus we have various communication
complexities: deterministic (denoted by $D(f)$), randomized
($R_\epsilon(f)$), and quantum $(Q_\epsilon(f))$, when the protocols are restricted to
be deterministic, randomized, and quantum, respectively,
and $\epsilon\in(0, 1/2)$ is a constant that upper-bounds the error
probability of the protocols.
In the randomized and the quantum cases we allow Alice and Bob share unlimited
amount of randomness or quantum entanglement, respectively.
Different choices of $\epsilon$ only result in a change of
a constant factor in the communication complexities, by
a standard application of the Chernoff Bound. 
Thus we may omit the subscripts in $R_\epsilon(F)$
and $Q_\epsilon(F)$ for asymptotic estimations.

A powerful method for proving quantum communication
complexity lower bounds is the following lemma, 
which was obtained by Razborov~\cite{Razborov}, 
extending a lemma of Yao~\cite{Yao:1993:circuit}.
Recall that the trace norm of a matrix $A\in\mathbb{R}^{N\times M}$ is
$\|A\|_{tr}\defeq \textrm{trace} \sqrt{A^TA}=\textrm{trace}\sqrt{AA^T}$.
Let $F$ be a partial Boolean function defined on a subset
$\dom(F)\subseteq X\times Y$. The {\em approximate trace norm} 
of $F$ with error $\epsilon$, $0\le \epsilon<1/2$, is
\[ \| F \|_{\epsilon, \trace}\defeq \ \min\{ \|\tilde F\|_{\trace}
: \tilde F\in\mathbb{R}^{N\times M},\ \forall (x, y)\in \dom(F),
\ |\tilde F(x, y) - F(x, y)|\le \epsilon\}.\]
\begin{lemma}[Razborov-Yao\cite{Razborov, Yao:1993:circuit}]
\label{lm:Razborov-Yao} For any partial Boolean function
$F$ whose domain is a subset of $X\times Y$,
$Q_\epsilon(F)=\Omega(\log \frac{\|F\|_{\epsilon, \trace}}{\sqrt{|X|\cdot|Y|}})$.
\end{lemma}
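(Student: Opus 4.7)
The plan is to show that any $q$-qubit quantum protocol with error $\epsilon$ for computing $F$ yields a real matrix $P\in\mathbb{R}^{|X|\times|Y|}$, namely the acceptance-probability matrix, that $\epsilon$-approximates $F$ on $\dom(F)$ and satisfies $\tracenorm{P}\le 2^{O(q)}\sqrt{|X|\cdot|Y|}$. Once this is established, the definition of $\|F\|_{\epsilon,\trace}$ gives $\|F\|_{\epsilon,\trace}\le\tracenorm{P}\le 2^{O(q)}\sqrt{|X|\cdot|Y|}$, and rearranging yields the claimed $Q_\epsilon(F)=q=\Omega(\log\frac{\|F\|_{\epsilon,\trace}}{\sqrt{|X|\cdot|Y|}})$.

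First I would fix an optimal $q$-qubit protocol $\Pi$ for $F$ (with shared entanglement allowed) and set $P(x,y)$ to be the probability that $\Pi$ accepts on input $(x,y)$. Since $\Pi$ errs with probability at most $\epsilon$, for every $(x,y)\in\dom(F)$ we have $|P(x,y)-F(x,y)|\le\epsilon$, so $P$ is a legitimate real approximant of $F$ of the type appearing in the definition of $\|F\|_{\epsilon,\trace}$. The remaining, and nontrivial, task is to upper-bound $\tracenorm{P}$.

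For this I would invoke the structural analysis of quantum protocols that underlies Yao's original lemma and its entanglement-assisted extension by Razborov. The key observation is that after each round of communication, the global pure state of Alice, Bob, and the ``message register'' can be expanded as a superposition indexed by the partial transcript, with each amplitude factoring into a part depending only on $x$ together with Alice's local workspace, and a part depending only on $y$ together with Bob's local workspace. Unrolling this $q$-qubit-long interaction leads to a bilinear expansion of the final acceptance probability of the form $P(x,y)=\sum_{i=1}^{N}\alpha_i(x)\,\beta_i(y)$, where $N=2^{O(q)}$ and $|\alpha_i(x)|,|\beta_i(y)|\le 1$. Viewing each summand as a rank-one matrix $\alpha_i\beta_i^{T}$ of trace norm $\|\alpha_i\|_2\|\beta_i\|_2\le\sqrt{|X|}\cdot\sqrt{|Y|}$, the triangle inequality gives $\tracenorm{P}\le N\sqrt{|X|\cdot|Y|}=2^{O(q)}\sqrt{|X|\cdot|Y|}$.

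The main obstacle is carrying out this decomposition in the shared-entanglement setting, because the initial state is not a product state and the simple round-by-round induction used for unentangled protocols no longer applies verbatim; one has to first absorb the entangled ancilla into the local workspaces and then verify that the $x$/$y$-factorization survives each of Alice's and Bob's unitaries as well as the exchange of the message register. Once that step is in place, combining the trace-norm bound on $P$ with its $\epsilon$-approximation property yields $\|F\|_{\epsilon,\trace}\le 2^{O(q)}\sqrt{|X|\cdot|Y|}$, and taking logarithms completes the proof.
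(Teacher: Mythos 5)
The paper cites this lemma from Razborov and Yao without reproducing a proof, so there is no in-paper argument to compare against; I evaluate your proposal on its own. Your outline is the standard one: extract the acceptance-probability matrix $P$ from a $q$-qubit protocol, observe that $P$ is an $\epsilon$-approximant of $F$, bound $\tracenorm{P}$ by $2^{O(q)}\sqrt{|X|\cdot|Y|}$, and rearrange. The crux is the trace-norm bound, and your route to it --- a bilinear expansion $P(x,y)=\sum_{i=1}^{N}\alpha_i(x)\beta_i(y)$ with $N=2^{O(q)}$ and $|\alpha_i(x)|,|\beta_i(y)|\le1$, followed by the triangle inequality --- is exactly Kremer's/Yao's argument and is correct for protocols \emph{without} shared entanglement.

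However, the paper's $Q_\epsilon$ permits unlimited prior entanglement, and there the decomposition you assert does not hold. Writing the entangled initial state in Schmidt form $\sum_j\sqrt{\lambda_j}\,|j\rangle_A|j\rangle_B$ and ``absorbing the ancilla into the local workspaces'' gives a final state $\sum_{j,c}\sqrt{\lambda_j}\,|A_{j,c}(x)\rangle|B_{j,c}(y)\rangle$, a superposition indexed by both the $q$-bit transcript $c$ and the Schmidt index $j$. If you index your bilinear expansion by $(j,c,c')$ you keep $|\alpha_i|,|\beta_i|\le1$ but the number of terms $N$ scales with the Schmidt rank, which is unbounded and certainly not $2^{O(q)}$. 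If instead you collapse the $j$-sum into $|\tilde A_c(x)\rangle$, $|\tilde B_c(y)\rangle$ to keep $N=2^{O(q)}$, the weights $\sqrt{\lambda_j}$ must be split between the two sides, and one cannot guarantee $\|\tilde A_c(x)\|\le1$ and $\|\tilde B_c(y)\|\le1$ simultaneously: whichever side carries the lighter share has norm that grows with the amount of entanglement. Either way the step $\tracenorm{P}\le N\sqrt{|X|\cdot|Y|}$ breaks, and this is precisely the obstacle you flag but do not resolve. The repair is to stop bounding the number of terms and the sup-norm of each separately and instead bound the row $\ell_2$-norms of a factorization $P=UV^{T}$ directly --- i.e.\ show $\gamma_2(P)\le2^{O(q)}$, which does survive shared entanglement (this is essentially the argument of Linial and Shraibman cited in the paper for the discrepancy bound). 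Then $\tracenorm{P}\le\|U\|_{F}\|V\|_{F}\le\sqrt{|X|\cdot|Y|}\,\gamma_2(P)\le2^{O(q)}\sqrt{|X|\cdot|Y|}$, and the remainder of your derivation goes through unchanged.
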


For matrix $B$, denote by $\|B\|$ its operator norm. Since
for any matrix $A$, $\|A\|_{tr} = \sup_{B, \|B\|=1}
|\textrm{trace}(B^T A)|$,  we have
\[ \|A\|_{tr} \ge \frac{|\textrm{trace}(B^T A)|}{\|B\|},\quad
\forall B\ne0.\] Therefore, in order to prove that $\|A\|_{tr}$ is large, 
we need only to find a $B$ so that $|\textrm{trace}(B^TA)|/\|B\|$ is large. 

\subsection{Approximate polynomial degree}
The study of low degree polynomial approximations of Boolean
function under the $\ell_\infty$ norm  was pioneered by Nisan and
Szegedy \cite{NisanS92} and Paturi \cite{Paturi}, and has
since then been a powerful tool in studying concrete complexities,
including the quantum decision tree complexity (c.f. the 
survey by Buhrman and de Wolf \cite{Buhrman:2002:CMD}).

Let $f\in\mathcal{F}_n$.  A real polynomial
$\tilde f:\mathbb{R}^n\to\mathbb{R}$ is said to {\em approximate}
$f$ with an error $\epsilon$, $0<\epsilon<1/2$, if
\[ |f(x)-\tilde f(x)|\le \epsilon,\quad\forall x\in\{0, 1\}^n.\]
The {\em approximate degree} of $f$, denoted by
$\appd_\epsilon(f)$ is
smallest degree of a polynomial approximating $f$ with an error
$\epsilon$. Difference choices for $\epsilon$ only result
in a constant factor difference in the approximate degrees. Thus
we omit the subscript $\epsilon$ for asymptotic estimations.

While the approximate degree of symmetric functions
has a simple characterization~\cite{NisanS92, Paturi},
it is difficult to determine in general. 
For example, the approximate degree of the two level AND-OR trees 
is still unknown. On the other hand, $\appd(f)$ is polynomially related
to the deterministic decision tree complexities $T(f)$.
Formally, $T(f)$ is defined to be the minimum integer $k$ such that
there is an ordered full binary tree $T$
of depth $k$ satisfying the following properties:
(a) each non-leaf vertex is labeled by a variable $x_i$, and 
each leaf is labeled by either $0$ or $1$ (but not both);
(b) for any $x\in\{0, 1\}^n$, the following walk leads to a leaf
labeled with $f(x)$: start from the root,
at each non-leaf vertex labeled with $x_i$, take the left
edge if $x_i=0$, and take the right edge otherwise.

\begin{theorem}[Nisan and Szegedy \cite{NisanS92}, Beals et al.
\cite{Beals:2001:QLB}]\label{thm:ns} For any Boolean function $f_n$,  there are
constants $c_1$ and $c_2$ such that
$c_1 T^{1/6}(f) \le \appd(f) \le c_2 T(f)$.
\end{theorem}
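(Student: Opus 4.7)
The plan is to prove the two inequalities separately; the lower bound passes through the \emph{block sensitivity} $bs(f)$ as an intermediate complexity measure.

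For the upper bound $\appd(f) \le c_2 T(f)$, I would use a direct polynomial construction from a decision tree of depth $T(f)$. Each leaf $\ell$ is reached by a unique root-to-leaf path that fixes at most $T(f)$ variables; let its indicator be $p_\ell(x) = \prod_{i \in S_\ell^1} x_i \prod_{i \in S_\ell^0} (1 - x_i)$, a monomial of degree at most $T(f)$. Then $\sum_{\ell : \mathrm{label}(\ell) = 1} p_\ell(x)$ is a multilinear polynomial equal to $f(x)$ on every $x \in \{0,1\}^n$, hence approximates $f$ with zero error. This gives $\appd(f) \le T(f)$, so $c_2 = 1$ suffices.

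For the lower bound $c_1 T(f)^{1/6} \le \appd(f)$, the argument splits in two. First, I would show $\appd(f) = \Omega(\sqrt{bs(f)})$. Fix an input $x$ (WLOG with $f(x) = 0$) and disjoint witnessing blocks $B_1, \ldots, B_b$ with $b = bs(f)$. Given an approximating $\tilde f$ of degree $d = \appd(f)$, define $p(z_1, \ldots, z_b) = \tilde f(y(z))$, where $y(z)$ is obtained from $x$ by flipping block $B_j$ precisely when $z_j = 1$. Then $p$ is multilinear of degree at most $d$, with $|p| \le 1 + \epsilon$ on $\{0,1\}^b$, $p(0) \le \epsilon$, and $p(e_j) \ge 1 - \epsilon$ for each $j$. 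Symmetrize by averaging $p$ over coordinate permutations of $z$; the result is a univariate polynomial $q$ of degree at most $d$ satisfying $|q(j)| \le 1 + \epsilon$ for $j \in \{0, 1, \ldots, b\}$, $q(0) \le \epsilon$, and $q(1) \ge 1 - \epsilon$. The Ehlich--Zeller / Rivlin--Cheney inequality bounds $|q|$ on the full interval $[0, b]$ by an absolute constant (assuming $d \le c\sqrt{b}$), and Markov's inequality for polynomials then gives $|q(1) - q(0)| = O(d^2/b)$. Combined with $|q(1) - q(0)| = \Omega(1)$, this forces $d = \Omega(\sqrt{b})$.

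Second, I would invoke Nisan's combinatorial bound $T(f) = O(bs(f)^3)$: one proves $T(f) \le C_1(f) \cdot bs(f)$ via a greedy decision tree that successively queries the variables of a minimal $1$-certificate of the current subfunction, and $C(f) \le bs(f)^2$ by noting that a minimal certificate has size at most $bs(f) \cdot s(f) \le bs(f)^2$. Chaining with the first step gives $T(f) = O(bs(f)^3) = O(\appd(f)^6)$, as required. The technical crux is the first step --- the symmetrization combined with the Ehlich--Zeller / Rivlin--Cheney and Markov estimates on the univariate polynomial $q$ --- while the upper bound and the combinatorial bound on $T(f)$ in terms of $bs(f)$ are comparatively routine.
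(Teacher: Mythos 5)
The paper does not prove this theorem; it is stated as a citation to Nisan--Szegedy~\cite{NisanS92} and Beals et al.~\cite{Beals:2001:QLB}, so there is no ``paper's own proof'' to compare against. Your reconstruction is the standard argument from those references and is correct: the upper bound $\appd(f)\le T(f)$ via the exact multilinear polynomial read off the decision tree; the lower bound $\appd(f)=\Omega(\sqrt{bs(f)})$ via block-flipping, Minsky--Papert symmetrization, Ehlich--Zeller/Rivlin--Cheney to extend boundedness from the integers $\{0,\ldots,b\}$ to the interval $[0,b]$, and Markov's inequality to bound the derivative; and the combinatorial chain $T(f)\le C^{(1)}(f)\cdot bs(f)$ together with $C(f)\le s(f)\cdot bs(f)\le bs(f)^2$, giving $T(f)=O(bs(f)^3)=O(\appd(f)^6)$. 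All steps are sound, and the exponent $1/6$ drops out exactly as in the original proof.
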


The exponent $1/6$ is not known to be optimal. The conjectured
value is $1/2$.

As observed by Buhrman, Cleve, and Wigderson~\cite{BCW98a},
a decision tree algorithm can be turned into a communication protocol
for a related problem. 
In such a protocol for $f_n \Box g_k$, one party
simulates the decision tree algorithm for $f_n$,
and initiates a sub-protocol
for computing $g_k$ whenever one input bit of $f_n$ is needed. 
The sub-protocol repeats an optimal protocol for $g_k$
for $O(\log \appd(f_n))$ times, 
ensuring that the error probability is $\le \frac{1}{3(\appd(f_n)/c_1)^6}$.
Thus the larger protocol computes 
$f_n\Box g_k$ with error probability $\le 1/3$,
and exchanges $O(R(g_k)\appd^6(f_n)\log\appd(f_n))$ bits.

\begin{proposition}[\cite{BCW98a, Beals:2001:QLB}]\label{prop:upper}
For any function $f_n\in\mathcal{F}_n$ with $\appd(f_n)=d$,
and any $g_k\in\mathcal{G}_k$,
$R(f_n\Box g_k)=O(R(g_k)d^6\log d)$.
\end{proposition}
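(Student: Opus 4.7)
The plan is to combine the Buhrman--Cleve--Wigderson decision-tree simulation with the polynomial-degree/decision-tree relationship of Theorem~\ref{thm:ns}. The paragraph immediately preceding the proposition already sketches the strategy; I would formalize it as follows.

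First, I would apply Theorem~\ref{thm:ns} to conclude that $T(f_n) \le (d/c_1)^6 = O(d^6)$, and fix a deterministic decision tree $\mathcal{T}$ of depth $t = T(f_n)$ computing $f_n$. Next, I would construct a randomized protocol for $f_n\Box g_k$ in which Alice and Bob (using shared randomness) jointly walk down $\mathcal{T}$: whenever the current internal node queries the $i$-th input bit to $f_n$, which by the definition of block-composition equals $g_k(x_i,y_i)$, they invoke a randomized sub-protocol for $g_k$ on their respective $k$-bit blocks $x_i,y_i$, branch on its output, and finally output the label of the reached leaf.

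The only quantitative issue is error control. A single execution of an optimal bounded-error protocol for $g_k$ costs $R(g_k)$ bits and errs with probability at most $1/3$, whereas the simulation performs at most $t$ such invocations on any input. To keep the overall error below $1/3$ it suffices, by a union bound, to drive the per-call error below $1/(3t)$; by a standard Chernoff / majority-vote amplification this is achieved by $O(\log t) = O(\log d)$ independent repetitions. Each boosted sub-protocol therefore costs $O(R(g_k)\log d)$ bits, and summing over the at most $t=O(d^6)$ simulated queries yields total communication $O(R(g_k)\, d^6 \log d)$, matching the claimed bound.

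There is essentially no substantive obstacle: the proposition is a direct composition of a known simulation technique with a known polynomial--decision-tree connection, and the only care needed is to verify that the constants in the amplification indeed combine with the union bound to produce overall error below $1/3$. I would include that tiny bookkeeping step explicitly but would not expect any of it to be nontrivial.
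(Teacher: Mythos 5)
Your proposal is correct and follows exactly the same route as the paper's sketch preceding the proposition: reduce to $T(f_n)=O(d^6)$ via Theorem~\ref{thm:ns}, simulate the decision tree with a $g_k$ sub-protocol per query, and amplify each call to error $O(1/d^6)$ with $O(\log d)$ repetitions so a union bound keeps the total error below $1/3$. The bookkeeping you describe matches the paper's parameters precisely.
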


\section{The Main Lemma}
\label{sec:mainlemma}
In this section, we prove that under some assumptions,
$Q(f_n\Box g_k)=\Omega(\appd(f_n))$.
This is shown by turning a ``witness'' for $f_n$ requiring a high
approximate degree into a ``witness'' for the hardness of
$f_n\Box g_k$.
\subsection{Witness of high approximate degree}
We now fix a function $f_n\in\mathcal{F}_n$
with  $\appd_\epsilon(f_n)=d$. For $w\in
\{0, 1\}^n$, denote by $\chi_w\in\mathcal{F}_n$
the function $\chi_w(x)=(-1)^{w\cdot x}$. Then there is no feasible solution to
the following linear system, where the unknowns are $\alpha_w$:
\begin{equation}\label{eqn:linear}
-\epsilon+f(x) \le \sum_{w:|w|<d} (-1)^{w\cdot x}\ \alpha_w \le
f(x) + \epsilon,\quad\forall x\in\{0, 1\}^n.
\end{equation}

By the duality of linear programming, there exist $q^+_x\ge
0$ and $q^-_x\ge 0$, $x\in\{0, 1\}^n$, such that
\[ \sum_{x} (q^+_x-q^-_x)\cdot\chi_w =0, \quad\forall w,\ |w|<d,\quad\textrm{and,}\]
\begin{equation}\label{eqn:dual}
\sum_{x} (q^+_x - q^-_x) f(x)+ \epsilon (q^+_x+q^-_x)
<0.\end{equation}

Define $q:\{0, 1\}^n\to \mathbb{R}$ as $q(x)=q^-_x-q^+_x$. Then
\[ q^T\chi_w = 0,\quad\textrm{and,}\quad 
\|q\|_1 < \frac{1}{\epsilon}\  q^T f.\]
Without loss of generality, assume that $q^T f=1$ (otherwise this
will hold after multiplying $q$ with an appropriate positive number).
Then $\|q\|_1<1/\epsilon$.

Since $q$ is orthogonal to all polynomials of degree less than
$d$, it has non-zero Fourier coefficients only on higher
frequencies: $q = \sum_{w:|w|\ge d} \hat q_w \chi_w$,
where $\hat q_w = \frac{1}{2^n}\sum_x q(x)\chi_w(x)$.
Since $\|q\|_1<1/\epsilon$, those Fourier coefficients must be
small:
$|\hat q_w| < \frac{1}{2^n\epsilon},\quad\forall w:|w|\ge d$.

We summarize the above discussion in the following lemma.
\begin{lemma}\label{lm:dual} Let $\epsilon\in\mathbb{R}$, $0\le \epsilon<1/2$.
For any $f\in\mathcal{F}_n$,
there exists a function $q:\{0, 1\}^n\to \mathbb{R}$ such
that: (a) $q^T f=1$, (b) $\|q\|_1< 1/\epsilon$, (c) 
$|\hat q_w|\le \frac{1}{2^n\epsilon}$,
for all $w\in\{0, 1\}^n$, and (d) $\hat q_w=0$ whenever $|w|<\appd_\epsilon(f_n)$.
\end{lemma}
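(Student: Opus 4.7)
The plan is to derive Lemma \ref{lm:dual} from linear-programming duality (Farkas' lemma) applied to the infeasibility of approximating $f_n$ by a low-degree real polynomial. Writing an arbitrary real polynomial of degree less than $d := \appd_\epsilon(f_n)$ in the Fourier basis as $\sum_{|w|<d}\alpha_w\chi_w$, the defining property of $d$ says exactly that the linear system (\ref{eqn:linear}) in the unknowns $\{\alpha_w\}$ is infeasible.

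Next I would apply Farkas' lemma to extract a dual certificate: nonnegative multipliers $\{q^+_x,q^-_x\}_{x\in\bool^n}$, one pair per two-sided constraint of (\ref{eqn:linear}), whose aggregate coefficient on each $\alpha_w$ vanishes (yielding $\sum_x(q^+_x-q^-_x)\chi_w(x)=0$ for every $|w|<d$) and whose aggregate right-hand side is strictly negative (yielding (\ref{eqn:dual})). Setting $q(x):=q^-_x-q^+_x$, property (d) is then immediate from these orthogonality conditions, since $\hat q_w$ is proportional to $\sum_x q(x)\chi_w(x)$.

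Then I would rearrange (\ref{eqn:dual}) as $\epsilon\|q\|_1<q^Tf$; this forces $q^Tf>0$, so one may rescale $q$ by a positive constant so that $q^Tf=1$. This rescaling yields (a) directly, and (\ref{eqn:dual}) becomes $\|q\|_1<1/\epsilon$, which is (b). For (c), the Fourier inversion formula $\hat q_w=2^{-n}\sum_x q(x)\chi_w(x)$ together with $|\chi_w(x)|\le 1$ and the triangle inequality gives $|\hat q_w|\le 2^{-n}\|q\|_1<1/(2^n\epsilon)$.

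The only point requiring care is the Farkas setup itself: each two-sided inequality in (\ref{eqn:linear}) must contribute a separate pair $(q^+_x,q^-_x)$ of nonnegative duals, and the sign convention $q=q^--q^+$ (rather than $q^+-q^-$) must be chosen so that the rescaling step produces $q^Tf=+1$ rather than $-1$. Beyond this bookkeeping every step is routine, and indeed the derivation is essentially already sketched in the paragraphs preceding the lemma; my plan is simply to organize that derivation into a clean sequence of four claims matching (a)--(d).
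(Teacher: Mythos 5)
Your proposal is correct and follows essentially the same route as the paper: the paper's proof of Lemma~\ref{lm:dual} is exactly the LP-duality (Farkas) argument you outline, from the infeasibility of system~(\ref{eqn:linear}), through the dual certificate~(\ref{eqn:dual}) with $q:=q^- - q^+$, to the rescaling $q^Tf=1$ and the Fourier-inversion bound on $|\hat q_w|$. Your one flagged subtlety (the sign convention so that $q^Tf>0$) is handled identically in the paper's ``without loss of generality'' rescaling step.
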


\subsection{Witness of large approximate trace norm}
In order to convert a witness of high approximate degree for $f_n$
to that of large approximate trace norm for $f_n\Box g_k$, 
we need to require that $g_k$ satisfies certain property,
which we now formulate.
Let $I_A, I_B\subseteq\bool^k$.
For $b\in\bool$, we identify a probability distribution $\mu$
on $g_k^{-1}(b)\cap I_A\times I_B$ with its representation as a matrix
in $\mathbb{R}^{I_A\times I_B}$, and call it a $b$-distribution.

Recall that the {\em discrepancy} of $g_k\in\mathcal{G}_k$, denoted by $\disc(g_k)$,
is 
\[ \disc(g_k) \defeq \min_{\mu}\max_{I_A, I_B\subseteq\bool^k} \left|\sum_{(x, y)\in I_A\times I_B}
\mu(x, y) (-1)^{g_k(x, y)}\right|,\]
where $\mu$ ranges over all distributions on $\dom(g_k)$.
We define a more restricted concept of discrepancy.
\begin{definition}
\label{def:spectradis}
The {\em spectral discrepancy} of $g_k\in \mathcal{G}_k$, denoted by $\rho(g_k)$,
is the minimum $r\in\mathbb{R}$ such that there exist
$I_A, I_B\subseteq \bool^k$, and $b$-distributions
$\mu_b\in\mathbb{R}^{I_A\times I_B}$ for $g_k$, $b\in\bool$,
satisfying the following conditions:
(1) $\sqrt{|I_A|\cdot|I_B|}\cdot \|\frac{\mu_0+\mu_1}{2}\|\le 1+r$, and,
(2) $\sqrt{|I_A|\cdot|I_B|}\cdot \|\frac{\mu_0-\mu_1}{2}\|\le r$.
\end{definition}
While (1) appears contrived, it will only be used in deriving a general lower bound
on quantum communication complexity. In all of explicit applications, (1) is trivially
satisfied with $\|\frac{\mu_0+\mu_1}{2}\|= 1/\sqrt{|I_A|\cdot|I_B|}$.

Kremer~\cite{Kremer} showed that $\log(1/\disc(g_k))$ is a lower bound for the quantum
communication complexity of $g_k$ when no shared entanglement is allowed.
Linial and Shraibman~\cite{LinialS07} recently showed that 
the lower bound holds even when shared entanglement is allowed.
\begin{theorem}[Linial and Shraibman~\cite{LinialS07}] 
\label{thm:discrepancyBound}
For any $g_k\in\g_k$,
$Q(g_k)=\Omega(\log(1/\disc(g_k)))$.
\end{theorem}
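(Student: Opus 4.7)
The plan is to invoke the Razborov-Yao approximate trace norm bound (Lemma~\ref{lm:Razborov-Yao}) after converting the hypothesis on $\disc(g_k)$ into a lower bound on the approximate trace norm via linear-programming duality. Fix a distribution $\mu$ on $\dom(g_k)$ achieving $\disc(g_k) = \delta$, let $M(x,y) = (-1)^{g_k(x,y)}$ denote the sign matrix (extended by $0$ outside $\dom(g_k)$), and form the dual witness $B := M \circ \mu$. By the definition of discrepancy, every rectangle-sum of $B$ has absolute value at most $\delta$ (the cut-norm bound), while $\langle M, B\rangle = \sum_{(x,y)} \mu(x,y) = 1$ and the entrywise $\ell_1$-norm $\|B\|_1 = 1$. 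So $B$ is a natural candidate for a dual witness.

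Next I would use $B$ to lower-bound the approximate trace norm. For any $\tilde M$ approximating $M$ within $2\epsilon$ entrywise on $\dom(g_k)$, trace-norm duality gives $\|\tilde M\|_{\trace} \ge |\langle \tilde M, A\rangle|/\|A\|_{op}$ for every $A \ne 0$. Taking $A \propto B$ and splitting $\langle \tilde M, B\rangle = \langle M, B\rangle + \langle \tilde M - M, B\rangle$, the error term is bounded by $2\epsilon\|B\|_1 = 2\epsilon$, so the ratio is at least $(1 - 2\epsilon)/\|B\|_{op}$. Converting between the Boolean matrix $F = (J-M)/2$ and $M$ (noting that $\|J\|_{\trace} = \sqrt{|X|\cdot|Y|}$ is a negligible additive term), the task reduces to upper-bounding the spectral quantity against which $B$ is paired in the trace-norm dual. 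Concretely we seek $\|B\|_{op} = O(\delta/\sqrt{|X|\cdot|Y|})$, which would give $\|g_k\|_{\epsilon,\trace} = \Omega(\sqrt{|X|\cdot|Y|}/\delta)$ and, via Lemma~\ref{lm:Razborov-Yao}, $Q(g_k) = \Omega(\log(1/\delta))$.

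The hard step is converting the combinatorial cut-norm control on $B$ into the required spectral bound. The naive inequality $\|B\|_C \le \sqrt{|X|\cdot|Y|}\|B\|_{op}$ goes in the wrong direction, so one must invoke Grothendieck's inequality: from $\|B\|_C \le \delta$ one deduces $\|B\|_{\infty\to 1}\le 4\delta$ (expanding $\pm 1$ test vectors as differences of $\{0,1\}$ indicators), and Grothendieck converts this into $\gamma_2^*(B) = O(\delta)$, where $\gamma_2^*$ is the dual of the $\gamma_2$ factorization norm; this in turn controls the operator-norm pairing after the appropriate dimensional rescaling. This Grothendieck-based conversion is the heart of the Linial-Shraibman argument, and is precisely what keeps the bound intact in the shared-entanglement setting: Grothendieck's inequality is dimension-free and therefore robust to the rank-$2^q$ structure of general entangled quantum protocols, whereas Kremer's earlier entanglement-free bound could use a more direct spectral argument. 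The main obstacle is carrying out this conversion precisely, with the right constants and normalizations, so that no polynomial factor is lost and the target logarithmic bound survives after dividing by $\sqrt{|X|\cdot|Y|}$ inside Lemma~\ref{lm:Razborov-Yao}.
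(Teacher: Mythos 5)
The paper does not prove Theorem~\ref{thm:discrepancyBound}; it imports it from Linial and Shraibman~\cite{LinialS07} as a black box. So the comparison must be against the actual Linial--Shraibman argument, and measured against that your proposal has a genuine gap in the middle.

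The fatal step is ``Concretely we seek $\|B\|_{op}=O(\delta/\sqrt{|X|\cdot|Y|})$.'' That estimate is simply false for the witness $B=M\circ\mu$. Grothendieck's inequality, applied to the cut-norm bound, gives control of the \emph{dual factorization norm} $\gamma_2^*(B)=O(\delta)$, and this is \emph{not} a bound on the operator norm in the sense you need. The two quantities are incomparable: take $\mu$ to be a point mass at a single $(x_0,y_0)$, so $B=\pm\,e_{x_0}e_{y_0}^T$; then the cut norm and $\gamma_2^*(B)$ both equal $1$, but $\|B\|_{op}=1$ as well, not $1/\sqrt{|X|\cdot|Y|}$. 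More generally, pairing against $\gamma_2^*$ lower-bounds $\gamma_2(\tilde M)$, not $\|\tilde M\|_{\trace}/\sqrt{|X||Y|}$, and the inequality between them goes the wrong way ($\gamma_2(A)\ge\|A\|_{\trace}/\sqrt{NM}$, so a large $\gamma_2$ does not force a large normalized trace norm). So you cannot cash in your Grothendieck step through Lemma~\ref{lm:Razborov-Yao}.

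The deeper issue is structural: the route through Lemma~\ref{lm:Razborov-Yao} is backwards. Linial and Shraibman prove the discrepancy bound by establishing a brand-new inequality, $Q(F)=\Omega(\log\gamma_2^{\alpha}(F))$, via a direct analysis of the acceptance-probability matrix of an entangled quantum protocol, showing it admits a low-$\gamma_2$ factorization. Grothendieck then enters only to relate $\gamma_2^{\alpha}$ (through $\gamma_2^{\infty}$ and LP duality) to $1/\disc$. There is no reduction to the approximate trace norm anywhere; indeed, the version of the trace-norm bound that allows shared entanglement is most naturally obtained as a \emph{corollary} of the $\gamma_2$ bound, since $\gamma_2(A)\ge\|A\|_{\trace}/\sqrt{NM}$. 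Your proposal treats the Razborov--Yao lemma as an available primitive and tries to push discrepancy through it, which both runs in the wrong logical direction and, as above, fails quantitatively at the norm-comparison step. The missing ingredient is precisely the protocol-to-$\gamma_2$ analysis, which your sketch does not touch.
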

Suppose that $\rho(g_k)$ is achieved with $I_A$, $I_B$ and $\mu$.
Since for any $I'_A\subseteq I_A, I'_B\subseteq I_B$,
\[|\sum_{(x, y)\in I'_A\times I'_B}\mu(x, y)(-1)^{g_k(x, y)}|
\le \sqrt{|I'_A|\cdot|I'_B|}\cdot\|\frac{\mu_0-\mu_1}{2}\|\le 
\frac{\sqrt{|I'_A|\cdot|I'_B|}}{\sqrt{|I_A|\cdot|I_B|}} \rho(g_k)
\le\rho(g_k),
\]
we have
\begin{eqnarray*}
\disc(g_k)&\le&\max_{J_A, J_B\subseteq\bool^k}|\sum_{(x, y)\in J_A\times J_B}\mu(x, y)(-1)^{g_k(x, y)}|\\
&\le& \max_{I'_A\subseteq I_A, I'_B\subseteq I_B}|\sum_{(x, y)\in I'_A\times I'_B}\mu(x, y)(-1)^{g_k(x, y)}|\\
&\le& \rho(g_k).
\end{eqnarray*}
It follows from Theorem~\ref{thm:discrepancyBound},
\begin{proposition}\label{prop:spectrad}
For any $g_k\in\mathcal{G}_k$, $Q(g_k)=\Omega(\log \frac{1}{\rho(g_k)})$.
\end{proposition}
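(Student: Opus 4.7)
The plan is to deduce the bound from the already-established discrepancy lower bound of Theorem~\ref{thm:discrepancyBound} by proving the inequality $\disc(g_k)\le\rho(g_k)$. Once this is in hand, substituting into $Q(g_k)=\Omega(\log(1/\disc(g_k)))$ gives the proposition immediately. In other words, the work is entirely to show that spectral discrepancy dominates ordinary discrepancy; no new quantum argument is required.

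To prove $\disc(g_k)\le\rho(g_k)$, I would fix $I_A, I_B\subseteq\bool^k$ and $b$-distributions $\mu_0, \mu_1$ witnessing the spectral discrepancy, and then exhibit the averaged measure $\mu\defeq(\mu_0+\mu_1)/2$ as a candidate distribution on $\dom(g_k)$ in the definition of $\disc(g_k)$. Since $\mu_b$ is supported inside $g_k^{-1}(b)\cap(I_A\times I_B)$, the sign-weighted mass of $\mu$ over any rectangle $J_A\times J_B\subseteq\bool^k\times\bool^k$ collapses to a bilinear form on $M\defeq(\mu_0-\mu_1)/2\in\mathbb{R}^{I_A\times I_B}$:
\[ \sum_{(x,y)\in J_A\times J_B}\mu(x,y)(-1)^{g_k(x,y)} \;=\; \mathbf{1}_{J_A\cap I_A}^{T}\,M\,\mathbf{1}_{J_B\cap I_B}. \]
Bounding this bilinear form by its operator norm gives an upper bound of $\|M\|\cdot\sqrt{|J_A\cap I_A|\cdot|J_B\cap I_B|}$, and condition~(2) of Definition~\ref{def:spectradis} supplies $\|M\|\le\rho(g_k)/\sqrt{|I_A|\cdot|I_B|}$. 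Multiplying and using $|J_A\cap I_A|\le|I_A|$, $|J_B\cap I_B|\le|I_B|$ produces a uniform bound of $\rho(g_k)$ over all rectangles, which is exactly $\disc(g_k)\le\rho(g_k)$.

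There is essentially no obstacle: the entire argument reduces to one Cauchy--Schwarz / operator-norm estimate, and the only non-trivial input is condition~(2) of the spectral discrepancy. Condition~(1) of Definition~\ref{def:spectradis} is not needed for this weaker bound; it is reserved for the Main Lemma, where it underwrites a stronger approximate trace norm estimate for $f_n\Box g_k$ rather than the raw discrepancy of $g_k$ itself.
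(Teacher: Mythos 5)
Your proof is correct and follows essentially the same route as the paper: take $\mu=(\mu_0+\mu_1)/2$ as the candidate distribution, observe that the sign-weighted rectangle mass is a bilinear form on $(\mu_0-\mu_1)/2$, and bound it by the operator norm and the vector lengths $\sqrt{|J_A\cap I_A|\cdot|J_B\cap I_B|}\le\sqrt{|I_A|\cdot|I_B|}$, giving $\disc(g_k)\le\rho(g_k)$ and then invoking Theorem~\ref{thm:discrepancyBound}. Your closing remark that condition~(1) of Definition~\ref{def:spectradis} is unused here and is only needed in the Main Lemma also matches the paper's own observation.
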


With the concept of spectral discrepancy, we are now ready to state and prove our Main Lemma.
\begin{lemma}[Main Lemma]\label{lm:main} Let $n, k\ge1$ be integers,
$g_k\in\mathcal{G}_k$, and $f_n\in\mathcal{F}_n$.
If $\rho(g_k)\le \frac{\appd(f_n)}{2en}$, then
$Q(f_n\Box g_k) = \Omega(\appd(f_n))$.
\end{lemma}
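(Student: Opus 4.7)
The plan is to prove the lemma by combining the dual witness from Lemma~\ref{lm:dual} with the spectral discrepancy witnesses guaranteed by $\rho(g_k)$, producing an explicit matrix $B$ that certifies a large approximate trace norm for $f_n\Box g_k$ via Lemma~\ref{lm:Razborov-Yao}. First I would fix the error parameter $\epsilon_0=1/3$ in Lemma~\ref{lm:dual} and obtain $q:\{0,1\}^n\to\mathbb{R}$ with $q^Tf_n=1$, $\|q\|_1<3$, and $\hat q_w=0$ for $|w|<d\defeq\appd(f_n)$, $|\hat q_w|\le 3/2^n$. Next I would take $I_A,I_B$ and the $b$-distributions $\mu_0,\mu_1$ achieving $\rho(g_k)=r$, and view $f_n\Box g_k$ as a partial function on $I_A^n\times I_B^n$ of total size $(|I_A||I_B|)^n$.

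The witness matrix I would construct is
\[
B\defeq\sum_{z\in\{0,1\}^n} q(z)\bigotimes_{i=1}^n \mu_{z_i},
\]
indexed by $(x_1\cdots x_n,y_1\cdots y_n)\in I_A^n\times I_B^n$. The first main computation is the trace lower bound: for any $\tilde F$ within $\epsilon$ (entrywise) of $f_n\Box g_k$ on this domain, since $\mu_{z_i}$ is a probability distribution supported on $g_k^{-1}(z_i)$, the tensor $\bigotimes_i\mu_{z_i}$ is supported on tuples $(X,Y)$ with $g_k(x_i,y_i)=z_i$, where $F=f_n(z)$, and its entries sum to $1$. This yields $\trace(\bigotimes_i\mu_{z_i})^T\tilde F=f_n(z)\pm\epsilon$, so that $|\trace(B^T\tilde F)|\ge |q^Tf_n|-\epsilon\|q\|_1\ge 1/2$ provided $\epsilon\le 1/6$.

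The second, more delicate computation is the operator-norm upper bound on $B$. I would Fourier-expand $q=\sum_w\hat q_w\chi_w$, exchange sums, and simplify
\[
B=\sum_{|w|\ge d}\hat q_w\bigotimes_{i=1}^n\nu_{w_i},\quad\text{where }\nu_0=\mu_0+\mu_1,\ \nu_1=\mu_0-\mu_1.
\]
By multiplicativity of operator norm under tensor products and the two inequalities in Definition~\ref{def:spectradis}, $\|\nu_0\|\le 2(1+r)/\sqrt{|I_A||I_B|}$ and $\|\nu_1\|\le 2r/\sqrt{|I_A||I_B|}$. Plugging in $|\hat q_w|\le 3/2^n$, the $2^n$ and $2^n$ factors cancel, leaving
\[
\|B\|\le\frac{3}{(|I_A||I_B|)^{n/2}}\sum_{j\ge d}\binom{n}{j}(1+r)^{n-j}r^j.
\]
The tail sum is the step I expect to be the main technical obstacle: using $\binom{n}{j}\le(en/j)^j$ and the hypothesis $r\le d/(2en)$, each term satisfies $(enr/j)^j\le 2^{-j}$, while $(1+r)^n\le e^{rn}\le e^{d/(2e)}$, yielding a total bound of $O((e^{1/(2e)}/2)^d)=2^{-\Omega(d)}$. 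Combining the two bounds and applying Lemma~\ref{lm:Razborov-Yao} to the partial function on $I_A^n\times I_B^n$ gives
\[
Q(f_n\Box g_k)=\Omega\!\left(\log\frac{|\trace(B^T\tilde F)|}{\|B\|\cdot(|I_A||I_B|)^{n/2}}\right)=\Omega(d)=\Omega(\appd(f_n)),
\]
completing the proof. The only subtlety beyond the geometric-sum estimate is keeping careful track of the normalization $(|I_A||I_B|)^{n/2}$ so that condition~(1) of Definition~\ref{def:spectradis} (which controls $\|\nu_0\|$) is used exactly where needed, while condition~(2) (on $\|\nu_1\|$) drives the exponential suppression.
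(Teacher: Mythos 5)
Your proposal is correct and follows the paper's proof essentially verbatim: it invokes Lemma~\ref{lm:dual} with $\epsilon=1/3$, constructs the same witness matrix (the paper's $h$, your $B$), establishes the same trace lower bound $|\trace(B^T\tilde F)|\ge 1/2$ and the same operator-norm upper bound by Fourier-expanding $q$ and applying the two spectral-discrepancy conditions, and finishes with the identical tail-sum estimate and Razborov--Yao. The only differences are notational and in the exact constants carried through the final geometric bound.
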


\begin{proof} Let $d\defeq \appd(f_n)$, and $F\defeq f_n\Box g_k$.
Suppose $\rho\defeq\rho(g_k)$ is achieved
with $I_A, I_B\subseteq \{0, 1\}^k$, and $\mu_b$, $b\in\{0, 1\}$.
Denote $K_A\defeq |I_A|$, $K_B\defeq |I_B|$. 
Let $F_1$ be the restriction of $f_n\Box g_k$ on 
$(I_A\times I_B)^{\otimes n}\cap\dom(F)$.
We shall prove the desired lower bound on $F_1$.
By Lemma~\ref{lm:Razborov-Yao}, 
it suffices to prove a lower bound on
$\|F_1\|_{\epsilon', \trace}$ for $\epsilon'=1/6$.
Let $q$ be the function that exists by Lemma~\ref{lm:dual} 
with respect to $f_n$ and $\epsilon=1/3$.

For a partition $\{w_1, w_2, \cdots, w_K\}$ of $[n]$, and matrices
$A_1, A_2, \cdots, A_k\in K_A\times K_B$,
denote by $\bigotimes_{k=1}^K A_k^{w_k} \in (\mathbb{R}^{K_A\times K_B})^{\otimes n}$
the product element that has $A_k$ in the components indexed by $w_k$.
Denote by $\bar w$ the complement of $w$.
Define $h\in(\mathbb{R}^{K_A\times K_B})^{\otimes n}$ as follows
\begin{equation}\label{eqn:hdef} h\defeq \sum_{z\in\bool^n} q(z)\cdot 
\bigotimes_{i=1}^n\ \mu_{z_i}^{\otimes\{i\}}.
\end{equation}

For a matrix $A=[A_{ij}]$, denote by $\|A\|_1\defeq \sum_{i, j}|A_{ij}|$.
Then $\|\mu_0\|_1=\|\mu_1\|_1=1$, and for any $z\in\bool^n$, 
\[\|\bigotimes_{i=1}^n\ \mu_{z_i}^{\otimes\{i\}}\|_1= \Pi_{i=1}^n\|\mu_{z_i}\|_1=1.\]
Since for a different $z$, the set of the non-zero entries in $\bigotimes_{i=1}^n\ \mu_{z_i}^{\otimes\{i\}}$
is disjoint, 
\[\|h\|_1=\sum_{z\in\bool^n} |q(z)| \|\bigotimes_{i=1}^n\ \mu_{z_i}^{\otimes\{i\}}\|_1=\|q\|_1\le 
1/\epsilon.\]

Note that $\trace((\bigotimes_{i=1}^n\ \mu_{z_i}^{\otimes\{i\}})^T F)=f(z_1, z_2, \cdots, z_n)$.
Thus 
\[ \trace(h^T F)=q^T f_n = 1.\]

Fix an $\tilde F\in(\mathbb{R}^{K_A\times K_B})^{\otimes n}$ with
$|F_1(x, y)-\tilde F(x, y)|\le \epsilon'$, $\forall (x, y)\in \dom(F_1)$.
Then, \[|\trace(h^T\tilde F)|=\left|\sum_{(x, y)\in\dom(F_1)}h(x, y)\tilde F(x, y)\right|
\ge \left|\sum_{(x, y)\in \dom(F)} h(x, y)F(x, y)\right|
 - \epsilon'\|h\|_1
\ge 1-\epsilon'/\epsilon\ge 1/2.\]

Therefore,
\begin{equation}\label{eqn:traceineq}
\|\tilde F\|_\trace \ge \frac{|\trace(h^T \tilde F)|}{\|h\|} 
\ge \frac{1}{2\|h\|}.
\end{equation}

Hence we need only to prove that $\|h\|$ is very small.
To this end we first express $h$ using the Fourier representation
of $q$:
\begin{eqnarray}
h&=&\sum_{z\in\{0, 1\}^n} \sum_{w:|w|\ge d} \hat q_w (-1)^{w\cdot
z} \cdot \bigotimes_{i=1}^n
\mu_{z_i}^{\otimes\{i\}} \nonumber\\
&=&\sum_{w:|w|\ge d} \hat q_w\cdot \sum_{z\in\{0, 1\}^n}
(-1)^{w\cdot z} \cdot \bigotimes_{i=1}^n
\mu_{z_i}^{\otimes\{i\}}\nonumber\\
&=&\sum_{w:|w|\ge d}\hat q_w\cdot ((\mu_0 + \mu_1 )^{\otimes
\bar w})\otimes ((\mu_0 - \mu_1)^{\otimes w}).\label{eqn:h}
\end{eqnarray}
Using $\hat q_w\le 1/{\epsilon 2^n}$,
\comments{
\begin{eqnarray}
\|h\| &\le& \sum_{w:|w|\ge d} |\hat q_w|
\|\mu_0 + \mu_1\|^{n-|w|}\cdot \|\mu_0 - \mu_1\|^{|w|}\nonumber\\
& \le & \frac{1}{\epsilon} \sum_{\ell, \ell\ge d} {n\choose \ell}\cdot
\|\frac{\mu_0 + \mu_1}{2}\|^{n-\ell}\cdot \|\frac{\mu_0 -
\mu_1}{2}\|^{\ell}. \label{eqn:bound-h}
\end{eqnarray}
}
\begin{equation}
\|h\| \le \sum_{w:|w|\ge d} |\hat q_w|
\|\mu_0 + \mu_1\|^{n-|w|}\cdot \|\mu_0 - \mu_1\|^{|w|}
 \le  \frac{1}{\epsilon} \sum_{\ell, \ell\ge d} {n\choose \ell}\cdot
\|\frac{\mu_0 + \mu_1}{2}\|^{n-|w|}\cdot \|\frac{\mu_0 -
\mu_1}{2}\|^{|w|}\ _. \label{eqn:bound-h}
\end{equation}

By the choice of $\mu_0$ and $\mu_1$,
$\|\frac{\mu_0+\mu_1}{2}\|\le\frac{1+\rho}{\sqrt{K_A K_B}}$,
and $\|\frac{\mu_0-\mu_1}{2}\|\le\frac{\rho}{\sqrt{K_A K_B}}$.
Thus 
\begin{equation}\label{eqn:bound-r}
\|h\| \le \frac{(1+\rho)^n}{\epsilon (K_AK_B)^{n/2}} \sum_{\ell: \ell\ge d} 
{n\choose \ell} \rho^\ell.
\end{equation}

If $\rho\le \frac{d}{2en}$, using ${n\choose l} \le
(\frac{en}{l})^l$, and $(1+\rho)^n\le e^{\rho n}$,
we have 
\comments{
\begin{eqnarray}
\label{eqn:bound-final}
\|h\|&\le& \frac{e^{\rho n}}{\epsilon (K_AK_B)^{n/2}} 
\sum_{\ell\ge d}
\left(\frac{en\rho}{\ell}\right)^\ell\nonumber\\
 &\le& \frac{e^{\rho n}}{\epsilon (K_AK_B)^{n/2}} \sum_{\ell\ge d} \left(\frac{d}{2\ell}\right)^\ell\nonumber\\
&\le& \frac{2}{\epsilon (K_AK_B)^{n/2}} e^{-.5 d}.
\end{eqnarray}
}
\begin{equation}
\label{eqn:bound-final}
\|h\|\le \frac{e^{\rho n}}{\epsilon (K_AK_B)^{n/2}} 
\sum_{\ell\ge d}
\left(\frac{en\rho}{\ell}\right)^\ell
 \le \frac{e^{\rho n}}{\epsilon (K_AK_B)^{n/2}} \sum_{\ell\ge d} \left(\frac{d}{2\ell}\right)^\ell\nonumber
\le \frac{2}{\epsilon (K_AK_B)^{n/2}} e^{-.5 d}.
\end{equation}
Together with Equation~\ref{eqn:traceineq}, this implies
$ \|\tilde F\|_{\trace} \ge \frac{\epsilon}{4}\cdot (K_AK_B)^{n/2} \cdot e^{.5d}$.
Thus $\|F_1\|_{1/6, \trace} \ge \frac{1}{24}\cdot(K_AK_B)^{n/2}\cdot e^{.5d}$.
Plugging this inequality to the Razborov-Yao Lemma, we have
$Q(F)\ge Q(F_1)=\Omega(d)$.
\end{proof}

\section{Applications}
We now apply the Main Lemma to derive two quantum lower bounds.
The first deals with those $g_k$ that have polynomially 
related quantum and randomized communication complexities.
As a concrete example we consider $g_k$ being the \textsc{Inner Product}
function.
The second result shows that without this knowledge on $g_k$, 
we may still able to obtain strong quantum lower bounds.
This is done through a ``hardness amplification'' technique
that makes use of the self-similarity of the function considered.
We demonstrate this technique by proving Theorem~\ref{thm:symmetric}.

\subsection{Composition with hard $g_k$}
We now restate Theorem~\ref{res:hardg:informal}
rigorously.
\begin{theorem}\label{res:hardg}
Let $n, k\ge1$ be integers and $g_k\in\mathcal{G}_k$. 
If $Q(g_k)$ and $R(g_k)$ are polynomially related, 
so is $Q(f_n\Box g_k)$ and $R(f_n\Box g_k)$ for any $f_n\in\mathcal{F}_n$
and for $\rho(g_k)\le \frac{1}{2en}$. 
\end{theorem}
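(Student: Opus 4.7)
The plan is to sandwich $R(f_n \Box g_k)$ between two expressions polynomial in $Q(f_n \Box g_k)$. Proposition \ref{prop:upper} already provides the upper bound $R(f_n \Box g_k) = O(R(g_k) \cdot \appd(f_n)^6 \log \appd(f_n))$, so the task reduces to bounding both $\appd(f_n)$ and $R(g_k)$ by polynomials in $Q(f_n \Box g_k)$. The first factor is immediate: if $\appd(f_n) = 0$ then $f_n$, and thus $f_n \Box g_k$, is constant and both complexities are $O(1)$; otherwise, with $d := \appd(f_n) \ge 1$, the hypothesis $\rho(g_k) \le \tfrac{1}{2en} \le \tfrac{d}{2en}$ triggers the Main Lemma (Lemma \ref{lm:main}), yielding $Q(f_n \Box g_k) = \Omega(d)$, i.e., $d = O(Q(f_n \Box g_k))$.

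For the $R(g_k)$ factor, the hypothesis supplies $R(g_k) = O(Q(g_k)^c)$ for some constant $c$, so it suffices to show $Q(g_k) \le Q(f_n \Box g_k)$ up to additive constants. I plan to prove this by a padding reduction: since $d \ge 1$, $f_n$ is non-constant, so there exist $a \in \bool^n$ and an index $i$ such that flipping the $i$-th bit of $a$ changes the value of $f_n$; since $\rho(g_k)$ is finite, both preimages $g_k^{-1}(0)$ and $g_k^{-1}(1)$ are non-empty, so Alice and Bob can publicly fix pairs $(\tilde x^{(b)}, \tilde y^{(b)}) \in g_k^{-1}(b)$ for each $b \in \bool$. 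Given a $g_k$-input $(x, y)$, they plant $x, y$ in the $i$-th block and put $(\tilde x^{(a_j)}, \tilde y^{(a_j)})$ in block $j \ne i$; any $f_n \Box g_k$ protocol run on this padded input returns a publicly known constant XORed with $g_k(x, y)$, hence solves $g_k$ with the same communication and the same error.

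Combining the three estimates yields
\[ R(f_n \Box g_k) = O\bigl(Q(g_k)^c \cdot Q(f_n \Box g_k)^6 \log Q(f_n \Box g_k)\bigr) = O\bigl(Q(f_n \Box g_k)^{c+6} \log Q(f_n \Box g_k)\bigr), \]
the desired polynomial equivalence. I do not anticipate a real obstacle: the only step with any content is the padding reduction, and its two prerequisites --- non-constancy of $f_n$ and non-constancy of $g_k$ --- are forced automatically by the quantitative hypotheses $\appd(f_n) \ge 1$ and $\rho(g_k) < \infty$. Everything else is bookkeeping over Proposition \ref{prop:upper}, the Main Lemma, and the assumed polynomial relation between $Q(g_k)$ and $R(g_k)$.
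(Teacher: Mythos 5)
Your proposal is correct and follows essentially the same route as the paper's own proof: the upper bound from Proposition~\ref{prop:upper}, the lower bound $\Omega(\appd(f_n))$ from the Main Lemma (valid since $\appd(f_n)\ge 1$ for non-constant $f_n$), and the block-fixing reduction showing $Q(f_n\Box g_k)\ge Q(g_k)$, with the degenerate constant cases handled separately. Your spelled-out padding argument is just a more explicit version of the paper's ``fix all but one block'' step, so there is nothing substantively different to flag.
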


\begin{proof} If $f_n$ or $g_k$ is a constant function,
$Q(f_n\Box g_k)=R(f_n\Box g_k)=0$, hence the statement holds.
Otherwise, one can fix the value of all but one input block
so that $f_n\Box g_k$ computes $g_k$ on the remaining
block. Thus $Q(f_n\Box g_k)\ge Q(g_k)$. By Main Lemma,
under the assumption that $\rho(g_k)\le \frac{1}{2en}$,
$Q(f_n\Box g_k)=\Omega(\appd(f_n))$. Thus
$Q(f_n\Box g_k)=\Omega(\appd(f_n))+Q(g_k))$.
On the other hand $R(f_n\Box g_k)=O(R(g_k)\appd^6(f_n)\log\appd(f_n))$,
by Proposition~\ref{prop:upper}.
Thus, under the assumption that $R(g_k)$ and $Q(g_k)$
are polynomially related, so are $Q(f_n\Box g_k)$ and $R(f_n\Box g_k)$.
\end{proof}

Similarly, the same statement holds
with $R(f_n\Box g_k)$ and $R(g_k)$ replaced by
$D(f_n\Box g_k)$ and $D(g_k)$, respectively.
Estimating $\rho(g_k)$ is unfortunately difficult
in general. However, if we can show
$\rho(g_k)=\exp(-\Omega(k^c))$ for some constant $c$,
it implies $R(g_k)$ and $Q(g_k)$ are polynomially related, 
by Proposition~\ref{prop:spectrad}.
Thus $Q(f_n\Box g_k)$ and $R(f_n\Box g_k)$ are
polynomially related for $k\ge \log_2^{1/c}(2en)$.

We now prove Corollary~\ref{col:ip}. 
\begin{proofof}{Corollary~\ref{col:ip}}
We need only to consider the case that $f_n$
is not a constant function.
Then $Q(f_n\Box g_k)=\Omega(\IP_k)$.
It is known that $Q(\IP_k)=\Omega(k)$~\cite{Cleve+}.
Thus $Q(f_n\Box g_k)=\Omega(k)$.
Let $K\defeq 2^k$, $I_A\defeq\bool^k-\{0^k\}$,
and $I_B\defeq\bool^k$. For $b\in\{0, 1\}$,
let $\mu_b$ be the uniform distribution on
$\{(x, y) : \IP(x, y)=b,\ x\ne 0\}$.
Then 
\[\|\frac{\mu_0+\mu_1}{2}\|=1/\sqrt{K(K-1)},\quad\textrm{and,}\quad
\|\frac{\mu_0-\mu_1}{2}\|=1/((K-1)\sqrt{K}).\]
Thus $\rho(\IP_k)\le 1/\sqrt{K-1}$.
When $k\ge 2\log_2 n+5 >\log_2(4e^2n^2+1)$,
we have $\rho(\IP_k)\le1/2en\le \appd(f_n)/(2en)$.
By Main Lemma~\ref{lm:main}, this implies $Q(f_n\Box \IP_k)=\Omega(\appd(f_n))$.
Therefore, $Q(f_n\Box \IP_k)=\Omega(k+\appd(f_n))$.

On the other hand, $D(f_n\Box \IP_k)\le k T(f_n)$, which is $O(k\appd^6(f_n))$
by Theorem~\ref{thm:ns}.
Thus $D(f_n\Box \IP_k)=O(Q^7(f_n\Box \IP_k)$.
\end{proofof}

We remark that since for a random $g_k$, $\rho(g_k)=\exp(-\Omega(k))$,
the above corollary holds for most $g_k$ up to a constant additive difference
in the bound for $k$.

\subsection{Composition with \textsc{Set Disjointness}}
In this section we prove Theorem~\ref{thm:symmetric}.
We introduce some notions following~\cite{Razborov}.
For an integer $k\ge1$, let
$[k]\defeq\{1, 2, \cdots, n\}$. For an integer $p$, $0\le p\le k$,
denote by $[k]^p$ the set of $p$-element subsets of $[k]$.
For integers $s$ and $p$ with $0\le s\le p\le k/2$,
denote by $J_{k, p, s}\in\{0, 1\}^{[k]^p\times [k]^p}$
the indicator function for $|x\cap y|=s$. That is,
for any $(x, y)\in[k]^p\times[k]^p$,
  \[(J_{k, p, s})_{x, y}\defeq \left\{
\begin{array}{cc}
1 &\textrm{if $|x\cap y| = s$}, \\0& \textrm{otherwise.}
\end{array}\right.\] 
The spectrum of these combinatorial
matrices are described by \emph{Hahn polynomials}
\cite{Delsarte78}. We will use a formula given by Knuth
\cite{Knuth91}.
 \begin{proposition}[Knuth]\label{prop:Knuth}
Let $p\le k/2$. Then the matrices $J_{k, p, s}$, $0\le s\le p$, 
share the same eigenspaces $E_0$, $E_1$,
 $\ldots$, $E_p$, and the eigenvalue 
 corresponding to the eigenspace $E_t$, $0\le t\le p$, is given by
 \begin{equation}\label{eqn:knuth}
 \sum_{i=\max\{0, s+t-p\}}^{\min\{s,t\}} (-1)^{t-i} {t\choose i}
 {p-i \choose s-i} {k-p-t+i \choose p-s-t+i}.
 \end{equation}
 \end{proposition}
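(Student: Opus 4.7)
The plan is to prove the proposition via the representation theory of the symmetric group $S_k$ acting naturally on $\mathbb{R}^{[k]^p}$. First I would verify pairwise commutativity of the $J_{k,p,s}$: the $(x,y)$-entry of $J_{k,p,s_1} J_{k,p,s_2}$ counts $p$-subsets $z$ with $|x \cap z| = s_1$ and $|z \cap y| = s_2$, and a direct $S_k$-symmetry argument shows this count depends only on $|x \cap y|$. Hence the $J_{k,p,s}$ span a commutative matrix algebra (the Bose-Mesner algebra of the Johnson scheme $J(k,p)$), and so admit a common orthogonal eigenspace decomposition.

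Next I would identify those common eigenspaces with the $S_k$-isotypic components of the permutation module $\mathbb{R}^{[k]^p}$. For $p \le k/2$ this module decomposes classically (Young's rule) as $\bigoplus_{t=0}^{p} E_t$, where $E_t$ is the multiplicity-one Specht module indexed by the two-row partition $(k-t,\, t)$. Since each $J_{k,p,s}$ is $S_k$-equivariant, Schur's lemma forces it to act as a scalar $\lambda_{s,t}$ on each $E_t$, producing the claimed $p+1$ common eigenspaces $E_0, \ldots, E_p$.

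To compute $\lambda_{s,t}$, I would realize $E_t$ concretely as $U(\ker D)$, where $U : \mathbb{R}^{[k]^t} \to \mathbb{R}^{[k]^p}$ is the raising operator $e_y \mapsto \sum_{x \supseteq y,\, |x|=p} e_x$ and $D : \mathbb{R}^{[k]^t} \to \mathbb{R}^{[k]^{t-1}}$ is the transpose shadow. An explicit element $w \in \ker D$ can be built by inclusion-exclusion from indicator vectors of fixed $t$-subsets. I would then compute a single coordinate of $J_{k,p,s}\, U(w)$ and stratify the resulting sum according to the size $i$ of the intersection of the outgoing $p$-subset with the reference $t$-subset. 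This should reproduce formula~(\ref{eqn:knuth}) term by term, with $(-1)^{t-i}$ arising from inclusion-exclusion, $\binom{t}{i}$ counting which $i$ elements of the reference $t$-subset survive, and the two remaining binomials counting the extensions to $p$-subsets with the prescribed intersection pattern.

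The main obstacle is this last combinatorial bookkeeping---picking a $w \in \ker D$ simple enough for $J_{k,p,s}\, U(w)$ to be summed in closed form, and tracking the signs so that what comes out is literally~(\ref{eqn:knuth}) rather than some equivalent but unrecognizable rewriting. An alternative route, closer to Knuth's original treatment, avoids representation theory entirely: write $J_{k,p,s}$ as a polynomial in the elementary operator $UD$ between adjacent levels and induct on $p$, reducing the eigenvalue identity to a Chu-Vandermonde-type binomial sum. Either route concentrates the whole combinatorial difficulty into the same underlying Hahn-polynomial identity.
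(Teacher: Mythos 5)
The paper does not prove Proposition~\ref{prop:Knuth}: it cites the statement directly from Knuth's 1991 manuscript \cite{Knuth91} and immediately proceeds to use it, so there is no in-paper argument to compare your sketch against. Your outline, however, is the standard spectral analysis of the Johnson association scheme and is conceptually sound: the matrices $J_{k,p,s}$ do commute pairwise (the $(x,y)$-entry of any product depends only on $|x\cap y|$), they generate the Bose--Mesner algebra, and for $p\le k/2$ Young's rule decomposes the permutation module $\mathbb{R}^{[k]^p}$ as the multiplicity-free direct sum of the two-row Specht modules $S^{(k-t,t)}$, $0\le t\le p$, which Schur's lemma then identifies as the common eigenspaces $E_t$. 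Reading off $\lambda_{s,t}$ by applying $J_{k,p,s}$ to a vector $U(w)$ with $w\in\ker D$ is likewise the standard device. What is missing is the actual computation: as written this is a roadmap rather than a proof, and whether your inclusion-exclusion lands on formula~(\ref{eqn:knuth}) with its exact summation limits $\max\{0,s+t-p\}\le i\le \min\{s,t\}$ and sign $(-1)^{t-i}$, as opposed to a Chu--Vandermonde-equivalent rearrangement, is precisely the step you have deferred. If matching the paper's normalization literally is the goal, the $UD$-induction route you mention at the end is closer to Knuth's own derivation and is the more direct way to obtain this particular form.
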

We actually need only to consider $s\in\bool$. Effectively, we 
are restricting $\disj_k$ on $\{(u, v) : u, v\in[k]^p,\ |u\cap v|\le 1\}$.
Denote this restriction by $\disjP_k$.

\begin{lemma}\label{lm:disj} Let $n, k\ge1$ be integers,
$f_n\in\mathcal{F}_n$, and $k\ge \frac{6en}{\appd(f_n)}$.
Then $Q(f_n\Box \disjP_k) = \Omega(\appd(f_n))$.
\end{lemma}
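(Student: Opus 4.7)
The plan is to instantiate the Main Lemma (Lemma~\ref{lm:main}) with $g_k = \disjP_k$. By that lemma, it suffices to exhibit sets $I_A, I_B\subseteq\bool^k$ and $b$-distributions $\mu_0, \mu_1$ witnessing $\rho(\disjP_k) \le \appd(f_n)/(2en)$. Since the hypothesis $k\ge 6en/\appd(f_n)$ is equivalent to $\appd(f_n)/(2en)\ge 3/k$, it is enough to prove $\rho(\disjP_k) = O(1/k)$ with a sufficiently small constant.

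First I would choose a parameter $p\le k/2$ of order $k$ (for concreteness, $p = \lfloor k/2\rfloor$), identify each $p$-subset of $[k]$ with its characteristic vector in $\bool^k$, and set $I_A = I_B = [k]^p$, so that $K_A = K_B = \binom{k}{p}$. Take $\mu_b$ to be the uniform probability distribution on $\{(u,v)\in I_A\times I_B : |u\cap v|=b\}$ for $b\in\bool$. As matrices, $\mu_b = J_{k,p,b}/\|J_{k,p,b}\|_1$, which by Proposition~\ref{prop:Knuth} are simultaneously diagonalizable with common eigenspaces $E_0, E_1, \ldots, E_p$. This reduces both $\|\frac{\mu_0+\mu_1}{2}\|$ and $\|\frac{\mu_0-\mu_1}{2}\|$ to the eigenvalue calculation afforded by Knuth's formula.

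On the one-dimensional $E_0$, spanned by the all-ones vector, both $\mu_0$ and $\mu_1$ attain their top eigenvalue $1/\binom{k}{p}$, so the $E_0$-component of $\mu_0-\mu_1$ vanishes while $\|\frac{\mu_0+\mu_1}{2}\|_{E_0} = 1/\sqrt{K_AK_B}$; a routine check confirms that no other eigenspace exceeds this, so condition~(1) of Definition~\ref{def:spectradis} holds with $r=0$. To bound $\|\frac{\mu_0-\mu_1}{2}\|$, I would specialize formula (\ref{eqn:knuth}) at $s\in\{0,1\}$, form the difference $\lambda_t^{(0)}/N_0 - \lambda_t^{(1)}/N_1$ on each $E_t$, $t\ge 1$, and simplify by falling-factorial identities. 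The $t=1$ term collapses to magnitude $\tfrac{k}{2p(k-p)\binom{k}{p}}$, which is $\Theta\!\bigl(1/(k\binom{k}{p})\bigr)$ whenever $p=\Theta(k)$.

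The main obstacle is obtaining a uniform bound across all eigenspaces: formula~(\ref{eqn:knuth}) is an alternating sum of binomial coefficients whose magnitude is delicate to control for intermediate $t$. I would show, via a careful binomial manipulation exploiting $p = \Theta(k)$, that $|\lambda_t^{(0)}/N_0 - \lambda_t^{(1)}/N_1|$ is dominated by the $t=1$ contribution up to a small constant factor, yielding $\|\frac{\mu_0-\mu_1}{2}\| = O\!\bigl(1/(k\sqrt{K_AK_B})\bigr)$ and hence $\rho(\disjP_k) = O(1/k)$ with a constant small enough that the hypothesis $k\ge 6en/\appd(f_n)$ forces $\rho(\disjP_k)\le \appd(f_n)/(2en)$. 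The Main Lemma then immediately delivers $Q(f_n\Box\disjP_k) = \Omega(\appd(f_n))$.
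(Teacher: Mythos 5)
Your overall strategy mirrors the paper exactly (instantiate the Main Lemma with $I_A=I_B=[k]^p$, take $\mu_b$ uniform on $b$-intersecting pairs, simultaneously diagonalize via Knuth's formula, and bound the eigenvalue differences), but your concrete parameter choice $p=\lfloor k/2\rfloor$ breaks the argument, and the step you defer --- showing that the $t=1$ eigenvalue difference dominates --- is simply false for that choice. After the simplification in the paper, the eigenvalue gap is
\[
|\lambda_{0,t}-\lambda_{1,t}|=\frac{1}{M}\,\frac{\binom{k-p-t}{p-t}}{\binom{k-p}{p}}\,\frac{t(k-t+1)}{p^2},
\]
and the paper's bound hinges on the geometric factor
\[
\frac{t\binom{k-p-t}{p-t}}{\binom{k-p}{p}}\le\Bigl(\frac{p}{k-p}\Bigr)^t\,t,
\]
which is $\le 1/2$ uniformly in $t$ only when $p/(k-p)$ is a constant strictly less than $1$; the paper takes $p=k/3$ so that $p/(k-p)=1/2$. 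With $p=k/2$ (even $k$), that ratio is $1$: the binomial prefactor $\binom{k-p-t}{p-t}/\binom{k-p}{p}=1$ for all $t$, so $|\lambda_{0,t}-\lambda_{1,t}|=t(k-t+1)/(Mp^2)$, which is maximized near $t\approx k/2$ at $\Theta(1/M)$, giving $\rho(\disjP_k)=\Theta(1)$ rather than $O(1/k)$. For odd $k$ and $p=(k-1)/2$, one gets $\binom{k-p-t}{p-t}/\binom{k-p}{p}=1-2t/(k+1)$, and $\bigl(1-2t/(k+1)\bigr)\,t(k-t+1)/p^2$ still peaks at $\Theta(1)$ for $t=\Theta(k)$, so again $\rho=\Theta(1)$. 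So ``$p=\Theta(k)$'' is not sufficient; you need $p$ bounded away from $k/2$ by a constant factor (e.g., $p=k/3$) so that $(p/(k-p))^t$ supplies exponential decay that kills the $t(k-t+1)/p^2$ growth and yields $\rho(\disjP_k)\le 3/k$ as in the paper. Once the parameter is fixed correctly, the rest of your outline goes through and matches the paper's proof.
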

\begin{proof}
Let $p\defeq k/3$ and $ M\defeq{k\choose p}$.
Let $w_s\defeq \left|(\disjP_k)^{-1}(s)\right|$, $s\in\{0, 1\}$.
That is,
\[w_0={k\choose p}{k-p \choose p} = M{k-p \choose p},\quad\textrm{and,}
\quad
w_1={k\choose p}{p\choose 1}{k-p \choose p-1} = M{p\choose 1}{k-p \choose p-1}.\]
Let $\mu_s$, $s\in\{0, 1\}$, be the distribution matrix
for the uniform distribution
on the $s$-inputs of $\disjP_k$. That is,
\[
 \mu_0 \defeq \frac{1}{w_0} J_{k, p, 0}, \quad \textrm{and,}\quad
  \mu_1 \defeq \frac{1}{w_1} J_{k, p, 1}.
\]
By Proposition~\ref{prop:Knuth},
$\mu_0$ and $\mu_1$ have the same eigenspaces.
Furthermore, if $\lambda_{s, t}$, $s\in\bool$ and $0\le t\le p$,
is the eigenvalue of $\mu_s$ for the eigenspace $E_t$,
\begin{equation}
\lambda_{s, t} = \frac{1}{w_s} \sum_{i=\max\{0,
s+t-p\}}^{\min\{s, t\}} (-1)^{t-i} {t\choose i}
 {p-i \choose s-i} {k-p-t+i \choose p-s-t+i},
\end{equation}
and 
\begin{equation}\label{eqn:disj:discrepancy}\|\mu_0-\mu_1\|
= \max_{t: 0\le t\le p} |\lambda_{0, t} - \lambda_{1, t}|.\end{equation} 
After simplification,
\[\lambda_{0, t} = \frac{(-1)^t}{M} \frac{{k-p-t \choose p-t}}{{k-p \choose p}},\quad\textrm{and,}\quad
\lambda_{1, t} = \frac{(-1)^t}{M} \left(\frac{{k-p-t \choose
p-1-t}}{{k-p \choose p-1}} -\frac{t{k-p-t+1 \choose
p-1-t+1}}{p{k-p \choose p-1}}\right) .
\]
Since $\lambda_{0, 0} = \lambda_{1, 0} = 1$, we only need to bound
$\max_{t} |\lambda_{0, t} - \lambda_{1, t}|$ for $t\ge 1$. From Proposition~\ref{prop:Knuth},
 \begin{eqnarray}
\lambda_{0, t} - \lambda_{1, t}&=& \frac{(-1)^t}{M} \frac{{k-p-t
\choose p-t}}{{k-p \choose p}}(1- \frac{p-t}{p} +
\frac{t(k-p-t+1)}{p^2}) \nonumber\\
 &=& (-1)^t\frac{1}{M} \frac{{k-p-t
\choose p-t}}{{k-p \choose p}}\frac{t(k-t+1)}{p^2}.\nonumber
 \end{eqnarray}
With $k=3p$,
 \[ \frac{t{k-p-t \choose p-t}}{{k-p \choose p}} =
 \frac{t\cdot p\cdot(p-1)\ldots(p-t+1)}{(k-p)\cdot(k-p-1)\ldots(k-p-t+1)} 
  \le  (\frac{p}{k-p})^t\cdot  t 
 = \frac{1}{2^t} t \le \frac{1}{2}.\]
Hence
 \begin{equation}\label{eqn:disj:eigen}
|\lambda_{0, t} - \lambda_{1, t}|\le \frac{1}{2}\cdot
 \frac{k-t+1}{Mp^2}\le \frac{1}{2}\cdot \frac{k}{M(\frac{k}{3})^2} =
\frac{6}{Mk}.
 \end{equation}
Therefore, $M\|\frac{\mu_0 - \mu_1}{2}\| \le \frac{3}{k}$.
Since $\frac{\mu_0+\mu_1}{2}$ is doubly stochastic,
$\|\frac{\mu_0+\mu_1}{2}\|=1$.
Thus we have
\begin{equation}\label{eqn:rhoDISJP}
\rho(g_k)\le 3/k.
\end{equation}
Therefore, when $k\ge 6en/d$, we have $\rho(g_k)\le d/(2en)$.
By Main Lemma~\ref{lm:main}, this implies
$Q(f_n\Box \disjP_k) = \Omega(\appd(f_n))$.
\end{proof}

Let $f_n\in\mathcal{F}_n$ be a symmetric function. 
Following~\cite{Razborov}, define
\[\ell_0(f_n)\defeq\max\{ m : 1\le m\le n/2,\ f_n(1^m0^{n-m})\ne
f_n(1^{m-1}0^{n-m+1})\}\cup\{0\},\]
and
\[\ell_1(f_n)\defeq\max\{n-m: n/2\le m< n,\ f_n(1^m0^{n-m})\ne
f_n(1^{m+1}0^{n-m-1})\}\cup\{0\}.\]
We will use the following
result in proving quantum lower bounds on $f_n\Box\wedge$.

\begin{theorem}[Paturi~\cite{Paturi}]\label{thm:Paturi} Let $f_n\in\mathcal{F}_n$
be symmetric. Then for some universal
constant $c$, $\appd(f_n)\ge c\sqrt{n(\ell_0(f_n)+\ell_1(f_n))}$.
\end{theorem}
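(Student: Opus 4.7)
The plan is to follow the classical Minsky--Papert symmetrization combined with a Bernstein/Markov-style inequality applied near the boundary of $[0,n]$. First I would take a multilinear real polynomial $\tilde f : \mathbb{R}^n\to\mathbb{R}$ of degree $d = \appd_{1/3}(f_n)$ approximating $f_n$ to within $1/3$ on $\{0,1\}^n$, and symmetrize: $p(t) \defeq \E_{\sigma}\tilde f(\sigma(x))$ where $\sigma$ ranges over permutations of $[n]$ and $|x|=t$. Since $f_n$ is symmetric, $p$ is a univariate polynomial in $t$ of degree at most $d$, and $|p(i) - f_n(1^i0^{n-i})| \le 1/3$ for every integer $i\in\{0,1,\ldots,n\}$. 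So it suffices to lower bound the degree of any such univariate $p$.

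Next I would use the definitions of $\ell_0$ and $\ell_1$. Writing $\ell_0 = \ell_0(f_n)$, there is an integer $m\le n/2$ with $m\ge \ell_0$ at which $f_n(1^m0^{n-m})\ne f_n(1^{m-1}0^{n-m+1})$, so $|p(m)-p(m-1)|\ge 1/3$. Symmetrically, there is a jump near the right endpoint at distance $\ell_1$ from $n$. Hence $p$ must change by at least a constant across a unit step lying at distance $\Theta(\ell_0)$ from $0$ (and similarly near $n$). The plan is to convert this large local variation into a lower bound on $d$ using a derivative-type inequality. Concretely, Bernstein's inequality (rescaled from $[-1,1]$ to $[0,n]$) asserts that for any polynomial $p$ of degree $d$ with $\|p\|_{\infty,[0,n]}\le C$,
\[
|p'(t)| \;\le\; \frac{Cd}{\sqrt{t(n-t)}}, \qquad t\in(0,n).
\]
Combining this with the mean value theorem on $[m-1,m]$ gives $1/3 \le Cd/\sqrt{(m-1)(n-m+1)} = O(Cd/\sqrt{\ell_0 n})$, i.e.\ $d = \Omega(\sqrt{n\ell_0}/C)$; the same argument near the other endpoint yields $d = \Omega(\sqrt{n\ell_1}/C)$. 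Averaging the two bounds gives the claimed $\Omega(\sqrt{n(\ell_0+\ell_1)})$.

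The main obstacle, and the technical heart of Paturi's argument, is that $p$ is only \emph{a priori} bounded at the integers $\{0,1,\ldots,n\}$, not on the entire real interval $[0,n]$; an unrestricted polynomial bounded at lattice points can be huge in between. To handle this I would invoke the Ehlich--Zeller / Coppersmith--Rivlin inequality: for any polynomial $p$ of degree $d \le n$,
\[
\|p\|_{\infty,[0,n]} \;\le\; \exp\!\bigl(O(d^2/n)\bigr)\cdot \max_{0\le i\le n}|p(i)|.
\]
If $d \le \sqrt{n}$ (otherwise the target bound is trivial since $\ell_0+\ell_1\le n$), this gives $\|p\|_{\infty,[0,n]}\le C$ for an absolute constant $C$, and we can plug into Bernstein above. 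Equivalently, one can reparametrize via $t = \tfrac{n}{2}(1-\cos\theta)$, turning $p$ into a trigonometric polynomial of degree $d$ in $\theta$ whose $\ell_\infty$ norm on the sampling angles controls its sup norm, and then use Bernstein's trigonometric inequality $|q'(\theta)|\le d\|q\|_\infty$; the integer points $i$ near the endpoint $0$ correspond to angles spaced $\Theta(1/\sqrt{in})$ apart, so a constant change over one such spacing forces $d = \Omega(\sqrt{in})$, which is exactly the desired inequality. Either route yields the theorem.
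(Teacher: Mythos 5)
The statement you are proving is not proved in the paper at all --- it is quoted from Paturi's 1992 paper --- so the only question is whether your sketch is a valid reconstruction. Your symmetrization step and the endpoint analysis are fine (modulo a minor patch: when the jump sits on $[0,1]$ the mean-value point can be arbitrarily close to $0$, where your Bernstein bound degenerates, so you need Markov's inequality there). The genuine gap is the parenthetical ``if $d\le\sqrt n$ (otherwise the target bound is trivial since $\ell_0+\ell_1\le n$)''. The target is $c\sqrt{n(\ell_0+\ell_1)}$, which can be as large as $cn$ when the relevant jump is near Hamming weight $n/2$ (e.g.\ majority-type functions, where Paturi's theorem gives $\Omega(n)$); knowing $d>\sqrt n$ does not make that trivial. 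And in the regime $\sqrt n < d < c\sqrt{n(\ell_0+\ell_1)}$ your argument does not close: the Coppersmith--Rivlin factor is $\exp(O(d^2/n))$, which under the contradiction hypothesis $d\le \epsilon\sqrt{n\ell_0}$ is only $\exp(O(\epsilon^2\ell_0))$, so the Bernstein step yields $1/3 \le \epsilon\exp(O(\epsilon^2\ell_0))$ --- no contradiction once $\ell_0$ is large. As written, your proof establishes only $\appd(f_n)=\Omega(\min\{\sqrt n,\sqrt{n(\ell_0+\ell_1)}\})$, i.e.\ essentially the Nisan--Szegedy-level $\Omega(\sqrt n)$ bound for non-constant symmetric functions, not Paturi's refinement.

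The same gap reappears in your trigonometric reformulation: after $t=\frac n2(1-\cos\theta)$ the sample angles near the endpoints are spaced $\Theta(1/\sqrt n)$ apart, so the assertion that ``the $\ell_\infty$ norm on the sampling angles controls the sup norm'' of a degree-$d$ trigonometric polynomial is only automatic when $d=O(\sqrt n)$; for $d\gg\sqrt n$ the polynomial can be enormous between the sparse endpoint samples, and one must localize the argument (this is exactly where Paturi and Coppersmith--Rivlin do the real work, e.g.\ by multiplying with damping/Chebyshev-type factors to confine the analysis to the region near the jump, where the samples are dense enough relative to $d$). So the outline captures the right ingredients for the easy regime but is missing the idea that handles jumps far from the endpoints, which is precisely where the theorem has its strength.
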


\begin{theorem}\label{thm:qlower} For any symmetric $f_n\in\mathcal{F}_n$,
$Q(f_n\Box \wedge)=\Omega( n^{1/3}\ell_0^{2/3}(f_n) + \ell_1(f_n))$.
\end{theorem}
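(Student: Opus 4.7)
\begin{proofof}{Theorem~\ref{thm:qlower} (plan)}
The plan is to exploit the self-similarity of the composition with $\wedge$: I will realize $\tilde f_m \Box \disjP_k$ as a subfunction of $f_n \Box \wedge$ for two suitable parameter regimes, then invoke Lemma~\ref{lm:disj} together with Paturi's bound (Theorem~\ref{thm:Paturi}) on $\appd(\tilde f_m)$ to obtain the two summands in the lower bound.

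The embedding I will use is the following. Fix integers $m,k\ge 1$ and a nonnegative integer $b$ with $mk+b\le n$, and set $p\defeq \lfloor k/3\rfloor$. Partition the $n$ positions of $f_n\Box \wedge$ into $m$ blocks of size $k$, a ``fixed-ones'' region of $b$ positions, and a ``fixed-zeros'' region of $n-mk-b$ positions. For the fixed-ones (resp.\ fixed-zeros) region, both parties set all of their bits to $1$ (resp.\ $0$). Inside each block $i$ of size $k$, Alice (resp.\ Bob) sets her bits to the characteristic vector of a $p$-subset $u_i\in [k]^p$ (resp.\ $v_i\in [k]^p$), restricted by the promise $|u_i\cap v_i|\le 1$. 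With this restriction $|a\wedge b_{\mathrm{bit}}|=b+|c|$, where $c_i\defeq \disjP_k(u_i,v_i)\in\{0,1\}$. Hence the restriction of $f_n\Box \wedge$ to these inputs equals $\tilde f_m\Box \disjP_k$ with the symmetric function $\tilde f_m(c)\defeq f_n(b+|c|)$, and therefore $Q(f_n\Box \wedge)\ge Q(\tilde f_m\Box \disjP_k)$. By Lemma~\ref{lm:disj}, the latter is $\Omega(\appd(\tilde f_m))$ provided $k\ge 6em/\appd(\tilde f_m)$.

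For the $n^{1/3}\ell_0^{2/3}$ summand I take $b=0$ and choose $m=\Theta(n^{2/3}\ell_0(f_n)^{1/3})$, $k=\lfloor n/m\rfloor=\Theta((n/\ell_0(f_n))^{1/3})$. Since $\ell_0(f_n)\le n/2$, these choices give $m\ge 2\ell_0(f_n)$, so the transition of $f_n$ at weight $\ell_0(f_n)$ sits below $m/2$ in $\tilde f_m$; hence $\ell_0(\tilde f_m)\ge \ell_0(f_n)$, and Paturi's bound yields $\appd(\tilde f_m)=\Omega(\sqrt{m\,\ell_0(f_n)})=\Omega(n^{1/3}\ell_0(f_n)^{2/3})$. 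The hypothesis $k\ge 6em/\appd(\tilde f_m)$ reduces up to constants to $k\gtrsim \sqrt{m/\ell_0(f_n)}$, which holds for our choice. For the $\ell_1$ summand I choose $k$ to be a sufficiently large absolute constant (divisible by $3$), $m=\Theta(\ell_1(f_n))$, and $b=n-\ell_1(f_n)-\lfloor m/2\rfloor$. Then the transition of $f_n$ at weight $n-\ell_1(f_n)$ falls at the middle of $\tilde f_m$, so $\ell_0(\tilde f_m)\wedge \ell_1(\tilde f_m)=\Theta(m)$ and Paturi gives $\appd(\tilde f_m)=\Omega(m)=\Omega(\ell_1(f_n))$; the condition $k\ge 6em/\appd(\tilde f_m)$ is then $O(1)$ and is satisfied. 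One checks $b\ge 0$ and $b+mk\le n$ using $\ell_1(f_n)\le n/2$. Combining the two regimes yields $Q(f_n\Box \wedge)=\Omega(n^{1/3}\ell_0(f_n)^{2/3}+\ell_1(f_n))$.

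The main obstacle is the bookkeeping of integrality and range constraints in the two parameter regimes: making sure $m\ge 2\ell_0(f_n)$ (and similarly that the forced transition in the $\ell_1$ case lies near $m/2$), verifying $b\ge 0$ and $b+mk\le n$, and confirming that the condition $k\ge 6em/\appd(\tilde f_m)$ of Lemma~\ref{lm:disj} holds with the constants supplied by Paturi's theorem. In boundary regimes (for instance $\ell_0(f_n)=\Theta(n)$ or $\ell_1(f_n)=\Theta(1)$) the bound being proved is itself $O(n)$ or $O(1)$, and the parameter choice degenerates to constant $k$ and $m=\Theta(\ell_0)$ (resp.\ $m=\Theta(\ell_1)$), which is where the self-similarity trick amounts simply to centering a sharp transition of $f_n$ inside a shorter symmetric target $\tilde f_m$.
\end{proofof}
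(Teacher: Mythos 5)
Your overall approach is essentially the paper's: embed $\tilde f_m\Box\disjP_k$ inside $f_n\Box\wedge$ by fixing some blocks to all-ones or all-zeros, invoke Lemma~\ref{lm:disj}, and lower bound $\appd(\tilde f_m)$ via Paturi's theorem. However, your first regime as written has a genuine gap that the paper resolves with an explicit case split. You claim that $\ell_0(f_n)\le n/2$ together with the choice $m=\Theta(n^{2/3}\ell_0^{1/3})$ gives $m\ge 2\ell_0(f_n)$. That is false for $\ell_0$ a constant fraction of $n$: writing $m=Cn^{2/3}\ell_0^{1/3}$, the inequality $m\ge 2\ell_0$ is equivalent to $\ell_0\le (C/2)^{3/2}\,n$, and enforcing it up to $\ell_0=n/2$ forces $C\ge 2^{1/3}$, hence $m\ge 2^{1/3} n$ when $\ell_0=n/2$. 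But then $k=\lfloor n/m\rfloor=0$, and indeed whenever $m>n/3$ we get $k<3$, so $p=\lfloor k/3\rfloor=0$ and $\disjP_k$ is vacuous. So with $b=0$ the derivation collapses exactly in the range $\ell_0=\Theta(n)$, and your Paturi estimate $\ell_0(\tilde f_m)\ge\ell_0$ no longer holds. There is also a circularity in choosing $k=\lfloor n/m\rfloor$ while needing $k\ge 6em/\appd(\tilde f_m)$: the two are within a constant of each other, so whether the condition holds depends on constants you never pin down.

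You gesture at the fix in your final paragraph ("the parameter choice degenerates to constant $k$ and $m=\Theta(\ell_0)$, centering a sharp transition"), and that \emph{is} the paper's resolution: it splits explicitly into $\ell_0\le\alpha n$ and $\alpha n<\ell_0\le n/2$ (for a concrete small $\alpha$), and in the second case it uses constant $k$ together with a padding by ones that places the $\ell_0$-transition at the \emph{center} of $\tilde f_m$'s weight range (so one reads it off as $\ell_1(\tilde f_m)=\Theta(m)$ and gets $\appd(\tilde f_m)=\Omega(m)=\Omega(\ell_0)=\Omega(n^{1/3}\ell_0^{2/3})$). The paper also avoids the circularity by first fixing $n'$ (your $m$) and then setting $k=\lceil 6en'/\appd(f_{n'})\rceil$, verifying $n'k\le n$ afterward rather than forcing $k=\lfloor n/m\rfloor$. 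To make your proof correct you need to promote the parenthetical remark about boundary regimes into an actual second case with a concrete threshold and a nonzero $b$; as currently written, the $\Omega(n^{1/3}\ell_0^{2/3})$ bound is only established for $\ell_0$ below some constant fraction of $n$.
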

The lower bound is weaker than Razborov's, which is
\begin{equation}\label{eqn_Rz}
Q((f_n\Box\wedge)=\Omega(\sqrt{n\ell_0(f_n)}+ \ell_1(f_n)).
\end{equation}
In the following proof, we first show that
$Q(f_n\Box\wedge)=\Omega(n^{1/3}\ell_0^{2/3}(f_n))$, then we show $Q(f_n\Box\wedge)=\Omega(\ell_1(f_n))$.
In both parts of the proof, we reduce an instance of $f_{n'}\Box \disjP_k$ to $f_n\Box\wedge)$
for some appropriate function $f_{n'}$ and $k$.
\begin{proofof}{Theorem~\ref{thm:qlower}}
Let $c$ be the constant in Theorem~\ref{thm:Paturi}, 
$\beta\defeq\min\{ \sqrt[3]{2}, \left(\frac{c}{12e}\right)^{2/3}\}$,
and $\alpha\defeq (\beta/2)^{3/2}$.

Consider the case that $\ell_0\defeq\ell_0(f_n)\le \alpha n$.
Let $n'\defeq \beta n^{2/3}\ell_0^{1/3}$,
and $f_{n'}\in\mathcal{F}_{n'}$ be such that $f_{n'}(x)=f_n(x0^{n-n'})$,
$\forall x\in\bool^{n'}$.
By direct inspection, $n'\le n$, thus $f_{n'}$ is well-defined.
Since
\[ f_{n'}(1^{\ell_0-1}0^{n'-\ell_0+1})=f_n(1^{\ell_0-1}0^{n-1\ell_0+1})
\ne f_n(1^{\ell_0}0^{n-\ell_0})=f_{n'}(1^{\ell_0}0^{n'-\ell_0}),\]
and by direct inspection, $\ell_0\le n'/2$, 
we have $\ell_0(f_{n'})\ge \ell_0$.
By Theorem~\ref{thm:Paturi}, 
\[ \appd(f_{n'})\ge c\sqrt{n' (\ell_0(f_{n'})+
\ell_1(f_{n'}))}\ge c\sqrt{n'\ell_0}.\]
Set $k\defeq\lceil \frac{6en'}{\appd(f_{n'})}\rceil$. 
By Lemma~\ref{lm:disj},
$Q(f_{n'}\Box \disjP_k)=\Omega(\appd(f_{n'}))=\Omega(n^{1/3}\ell_0^{2/3})$. 
Note that 
\[ n'k \le \beta n^{2/3}\ell_0^{1/3}\cdot 
\frac{12e\sqrt{\beta}}{c}
\left(\frac{n}{\ell_0}\right)^{1/3}=
\beta^{3/2}\frac{12e}{c}n\le n.\]
Therefore, $\forall(x, y)\in\dom(f_{n'}\Box\disjP_k)$,
we have $(f_{n'}\Box \disjP_k)(x, y)=(f_n\Box\wedge)(x0^{n-n'k}, y0^{n-n'k})$.
Thus $Q(f_n\Box\wedge)\ge Q(f_{n'}\Box \disjP_k)=\Omega(n^{1/3}\ell_0^{2/3})$.

Now consider the case that $\alpha n< \ell_0\le n/2$.
Set $k\defeq \lceil\frac{6\sqrt{2}e}{c}\rceil$, and
$n'\defeq\min \{\frac{n-\ell_0+1}{2k-1}, \ell_0-1\}$.
Then $n'=\Theta(n)=\Theta(\ell_0)$.
Define $f_{n'}\in\mathcal{F}_{2n'}$ as follows:
\[ f_{n'}(x) = f_n(x1^{\ell_0-1-n'}0^{n-2n'-(\ell_0-1-n')}),
\quad\forall x\in\bool^{2n'}.\]
By direct inspection, $f_{n'}$ is well-defined.
Then 
\[ f_{n'}(1^{n'}0^{n'})=f_n(1^{\ell_0-1}0^{n-\ell_0+1})
\ne f_n(1^{\ell_0}0^{n-\ell_0})=f_{n'}(1^{n'+1}0^{n'-1}).\]
Therefore, $\ell_1(f_{n'})= n'$, and $\appd(f_{n'})\ge \sqrt{2}c n'$, 
by Theorem~\ref{thm:Paturi}. 
By direct inspection, $k\ge \frac{6e(2n')}{\appd(f_{n'})}$, thus
$Q(f_{n'}\Box \disjP_k)=\Omega(\appd(f_{n'}))=\Omega(n')$.
Note that for all $(x, y)\in \dom(f_{n'}\Box \disjP_k)$,
\[(f_{n'}\Box \disjP_k)(x, y)=(f_n\Box \wedge)(x 1^{\ell_0-1-n'}0^{n-(\ell_0-1-n')-2kn'},
y 1^{\ell_0-1-n'}0^{n-(\ell_0-1-n')-2kn'}).\]
By direct inspection, the number of $0$'s and $1$'s padded in the above equation
is non-negative.
Thus 
\[Q(f_n\Box\wedge)=\Omega(Q(f_{n'}\Box \disjP_k)=\Omega(n')=\Omega(\ell_0)=
\Omega(n^{1/3}\ell_0^{2/3}).\]

We use a similar reduction
to prove $Q(f_n\Box\wedge)=\Omega(\ell_1)$.
Let $k$ be the same as above.
Set $n'\defeq \lfloor \frac{\ell_1}{2k-1}\rfloor$,
and define $f_{n'}\in\mathcal{F}_{2n'}$ as follows
\[ f_{n'}(x)=f_n(x1^{n-\ell_1-n'} 0^{n-2n'-(n-\ell_1-n')})\quad\forall x\in\bool^{2n'}.\]
By direct inspection, the numbers of 
padded $0$'s and $1$'s are non-negative, thus
$f_{n'}$ is well-defined.
Since
\[ f_{n'}(1^{n'}0^{n'})=f_n(1^{n-\ell_1}0^{\ell_1})\ne
f_n(1^{n-\ell_1+1}0^{n-\ell_1-1})=f_{n'}(1^{n'+1}0^{n'-1}),\]
we have $\ell_1(f_{n'})=n'$. Thus 
$\appd(f_{n'})\ge \sqrt{2}cn'$
by Theorem~\ref{thm:Paturi}, and 
$Q(f_{n'}\Box\disjP_k)=\Omega(\appd(f_{n'}))=\Omega(\ell_1)$
by Lemma~\ref{lm:disj}.
For all $(x, y)\in\dom(f_{n'}\Box\disjP_k)$,
\[(f_{n'}\Box \disjP_k)(x, y) = (f_n\Box\wedge)(x 1^{n-\ell_1-n'}0^{n-2kn'-(n-\ell_1-n')},
y 1^{n-\ell_1-n'}0^{n-2kn'-(n-\ell_1-n')}).\]
By direct inspection again, 
the numbers of the padded digits in the above are non-negative.
Thus $Q(f_n\Box\wedge)\ge Q(f_{n'}\disjP_k)=\Omega(\ell_1)$.
\end{proofof}

Next, we establish a classical upper bound
on the randomized complexity of symmetric
predicates. 

\begin{proposition}\label{prop:rupper} Let $f_n\in\mathcal{F}_n$ be symmetric
with $\ell_0(f_n)=0$. Then
\[R(f_n\Box\wedge)=O(\ell_1\log^2\ell_1\log\log\ell_1).\]
\end{proposition}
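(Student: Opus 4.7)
The plan is to exploit the structural consequence of $\ell_0(f_n)=0$: together with the definition of $\ell_1$, it forces the symmetric function $f_n$ to take a single value $c\defeq f_n(0^n)$ on every input of Hamming weight at most $n-\ell_1$, while on weights in $(n-\ell_1, n]$ the value of $f_n$ is determined by the exact weight alone. Thus, writing $s\defeq |\bar x\cup\bar y|=n-|x\wedge y|$, evaluating $(f_n\Box\wedge)(x, y)$ reduces to determining $s$ exactly whenever $s<\ell_1$, and to otherwise outputting $c$.

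The protocol I propose proceeds in three stages. First, Alice and Bob each send $\min(|\bar x|, \ell_1)$ and $\min(|\bar y|, \ell_1)$ using $O(\log\ell_1)$ bits; if either value reaches $\ell_1$ then $s\ge\ell_1$ and they output $c$. Otherwise both local complement weights are strictly less than $\ell_1$, so the identity
\[ s=\tfrac{1}{2}\bigl(|\bar x|+|\bar y|+|\bar x\oplus\bar y|\bigr)=\tfrac{1}{2}\bigl(|\bar x|+|\bar y|+|x\oplus y|\bigr) \]
reduces the task to computing $|x\oplus y|$ exactly under the guarantee $|x\oplus y|\le|\bar x|+|\bar y|<2\ell_1$. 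For this I invoke the $O(d\log d)$-bit protocol of Huang et al.~\cite{Hamming}---already used in the paragraph preceding Proposition~\ref{prop:quadGap}---that decides $|x\oplus y|\ge d$, driven by a binary search over $d\in\{1, 2, \ldots, 2\ell_1\}$. The search issues $O(\log\ell_1)$ queries of cost at most $O(\ell_1\log\ell_1)$ each; amplifying every query to error $O(1/\log\ell_1)$ by $O(\log\log\ell_1)$ independent repetitions and taking a union bound keeps the overall error below $1/3$. Summing,
\[ O(\log\ell_1)\cdot O(\ell_1\log\ell_1)\cdot O(\log\log\ell_1)=O(\ell_1\log^2\ell_1\log\log\ell_1), \]
which subsumes the $O(\log\ell_1)$ preamble and matches the claimed bound.

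The only step requiring care is the error-amplified binary search, which is where the $\log\log\ell_1$ overhead arises; it is nonetheless a routine amplification argument and uses no property of $f_n$ beyond the structural reduction above. The structural reduction itself is immediate from the definitions: $\ell_0(f_n)=0$ yields constancy of the symmetric $f_n$ on Hamming weights $\{0, 1, \ldots, \lfloor n/2\rfloor\}$, while the maximality clause in the definition of $\ell_1$ rules out any value change between consecutive weights in $\{\lceil n/2\rceil, \ldots, n-\ell_1\}$; concatenating these two ranges shows that $f_n\equiv f_n(0^n)$ throughout $\{0, 1, \ldots, n-\ell_1\}$, as asserted.
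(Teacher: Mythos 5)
Your proposal is correct and follows essentially the same strategy as the paper's proof: use $\ell_0(f_n)=0$ to reduce to detecting whether the complement weights are small, then binary-search for $|x\oplus y|$ via the Huang et al.\ Hamming-distance protocol with $\Theta(\log\log\ell_1)$-fold amplification. The only cosmetic difference is that you phrase the reduction in terms of $s=n-|x\wedge y|$ and the identity $s=\tfrac{1}{2}(|\bar x|+|\bar y|+|x\oplus y|)$, while the paper works directly with $|x\cap y|=(|x|+|y|-|x\oplus y|)/2$ after Alice sends $z_A=n-|x|$; both yield the same protocol and the same bound.
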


Theorem~\ref{thm:symmetric} follows from Theorem~\ref{thm:qlower}
and Proposition~\ref{prop:rupper}: if $\ell_0(f_n)\ge 1$, $Q(f_n\Box\wedge)=\Omega(n^{1/3})
=\Omega(D^{1/3}(f_n\Box\wedge)=\Omega(R^{1/3}(f_n\Box\wedge))$.
Otherwise, $Q(f_n\Box\wedge)=\Omega(\ell_1(f))=\Omega(R^{1/2}(f_n\Box\wedge))$.
Similarly, Proposition~\ref{prop:quadGap} follows from Proposition~\ref{prop:rupper}
and Razborov's lower bound Equation~\ref{eqn_Rz}.

To prove Proposition~\ref{prop:rupper}, we use the following result from Huang et al.~\cite{Hamming}.
Let $n$ and $d$ be integers with $0\le d\le n$. The \textsc{Hamming Distance
Problem} $\ham_{n, d}$ is defined as
\begin{equation*} \ham_{n, d}(x, y)=\left\{\begin{array}{cc} 1&\textrm{$|x\oplus y|\ge d$,}\\
0&\textrm{otherwise.}\end{array}\right.\end{equation*}

\begin{theorem}[Huang et al.~\cite{Hamming}] \label{thm:Hamming}
There is randomized protocol for $\ham_{n, d}$ 
that exchanges $O(d\log d)$ bits and errs with probability $\le 1/3$.
\end{theorem}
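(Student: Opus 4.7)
The plan is to build a protocol that, whenever $|x\oplus y|\le d-1$, explicitly recovers the difference set $D=\{i:x_i\ne y_i\}$ using $O(d\log d)$ bits, together with a verification step that rejects (with high probability) whenever the premise $|D|<d$ is false. Once $D$ is either recovered and verified, or the verification step detects failure, Alice and Bob output $0$ or $1$ accordingly. This reduces the task to exact sparse-difference identification plus a cheap certification.

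The first ingredient is dimension reduction. Using public coins, Alice and Bob sample a pairwise-independent hash $h:[n]\to[N]$ with $N=\Theta(d^2)$. A standard birthday-style argument shows that if $|D|\le d-1$, then with constant probability $h$ is injective on $D$, so the hashed/bucketed problem is equivalent to set reconciliation on a universe of size $O(d^2)$ between two sets whose symmetric difference has size at most $d-1$. The second ingredient is a BCH-style syndrome exchange on this small universe: over $\mathbb{F}_q$ with $q=\Theta(d^2)$, Alice encodes her bucket-XOR vector via its first $2d-1$ power-sum symmetric polynomials, sends them at $O(\log d)$ bits each, and Bob subtracts his own syndromes and runs Berlekamp--Massey (equivalently, Peterson's decoder) on the difference to recover the support of $h(D)$. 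Pulling back through $h$ yields a candidate set $\hat D\subseteq[n]$ with $|\hat D|\le d-1$, from which Bob reads off a purported Hamming distance.

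The final step is verification, which will be the main obstacle. In the near case ($|x\oplus y|<d$) the above procedure is correct up to the constant-probability events above; in the far case ($|x\oplus y|\ge d$) the syndrome decoder may either declare failure (good) or spuriously return some small $\hat D$ (bad). To guard against the latter, Alice and Bob exchange a public-coin linear fingerprint of $x\oplus y$ restricted to $[n]\setminus\hat D$ modulo a random prime of size $\mathrm{poly}(d)$, certifying that no unaccounted differences remain; they also fingerprint agreement on $\hat D$ itself. Each fingerprint costs $O(\log d)$ bits, and constant independent repetition of the whole pipeline (hash, decode, fingerprint) drives the overall error below $1/3$. The total communication across all rounds is $O(d\log d)$ bits, as required.
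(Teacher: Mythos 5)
First, note that the paper does not prove Theorem~\ref{thm:Hamming} at all: it is quoted from Huang et al.~\cite{Hamming} and used as a black box in the proof of Proposition~\ref{prop:rupper}, so your sketch has to stand on its own. As written it has a genuine gap at the step ``pulling back through $h$ yields a candidate set $\hat D\subseteq[n]$.'' The BCH/Berlekamp--Massey decoding of the syndrome difference identifies only the set $B\subseteq[N]$ of \emph{buckets} in which Alice's and Bob's bucket parities disagree; it says nothing about which coordinate inside a bucket is responsible. A bucket $h^{-1}(j)$ has size about $n/d^2$, and neither party knows where $x$ and $y$ differ inside it (that is exactly the information they lack), so recovering a coordinate-level $\hat D$ would cost roughly $\Omega(\log(n/d^2))$ extra bits per difference, destroying the $O(d\log d)$ bound. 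Since your entire verification step is phrased in terms of this coordinate-level $\hat D$ (fingerprinting $x\oplus y$ on $[n]\setminus\hat D$ and ``agreement on $\hat D$''), it cannot be implemented as described.

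This matters because the verification was carrying the real burden of soundness in the far case. When $|x\oplus y|\ge d$, the dangerous event is not only a misbehaving decoder: even a \emph{correct} bucket-level decoding can report a support of size at most $d-1$, because pairs of differing coordinates that collide in a bucket cancel in the parity, and three-way collisions undercount. Your sketch never bounds the probability of this cancellation event (for all $|x\oplus y|\ge d$, including $|x\oplus y|\gg d$, where a birthday bound alone does not suffice), and the fingerprint that was supposed to catch it is the unimplementable one above. The argument can be repaired: have Bob send $B$ ($O(d\log d)$ bits), fingerprint the \emph{original} strings restricted to $\{i: h(i)\notin B\}$ (this catches miscorrections and any differences hiding in buckets of even multiplicity, since the raw coordinates there still differ), and then bound separately the probability that at least $d$ differences all land in at most $d-1$ buckets with every occupied bucket of odd multiplicity; combined with the observation that in the near case the protocol has one-sided error, constant repetition then works. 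But that analysis is exactly what is missing, so as it stands the proposal does not yet prove the theorem.
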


\begin{proofof}{Proposition~\ref{prop:rupper}}
Without loss of generality, assume $f_n(1^m0^{n-m})=0$ for all $m$, $0\le m\le n-\ell_1$.
The following randomized protocol computes $f_n\Box \wedge$
with $O(\ell_1\log^2\ell_1\log\log\ell_1)$ bits of communication.
Fix an input $(x, y)$, and let $z_A\defeq n-|x|$ and $z_B\defeq n-|y|$.
Alice and Bob first check if $z_A\ge \ell_1$ or $z_B\ge \ell_1$.
If yes, they output $0$ and terminate the protocol. Otherwise,
Alice sends $z_A$ to Bob using $\lceil\log_2(\ell_1-1)\rceil$ bits, and they 
compute $\delta\defeq|x\oplus y|$. Knowing $z_A$ and $\delta$, Bob
is able to compute $f(|x\cap y|)=f((|x|+|y|-|x\oplus y|)/2)$.
Note that $\Delta\defeq 2(\ell_1-1)\ge \delta\ge 0$.
Thus Alice and Bob can perform a binary search to determine $\delta$
with $\log_2(\Delta+1)$ sub-protocols for the \textsc{Hamming Distance
Problem}.
For each candidate value $d$ of $\delta$, they repeat
the randomized protocol in Theorem~\ref{thm:Hamming}
for $\ham_{n, d}$ for $\Theta(\log \log \Delta)$ times so that
the error probability is $\le\frac{1}{3(\log_2\Delta  +1)}$. 
Thus the total number of bits exchanged is
$O(\Delta\log^2\Delta\log\log \Delta)=
O(\ell_1\log^2\ell_1\log\log\ell_1)$,
and the error probability of the complete protocol is $\le 1/3$.
\end{proofof}

\begin{remark}
While both Razborov's proof and the above use the spectrum
decompositions of the matrix $J_{k, p, s}$, we emphasize
their difference: we only need to analyze
$\|\frac{\mu_0-\mu_1}{2}\|$, which corresponds to $s=0, 1$.
In contrast, Razborov's proof needs much more details
of the spectrum decompositions, in particular,
it needs to consider $s=0, 1, \cdots, \Theta(n)$.

Theorem~\ref{thm:symmetric} implies $Q(\disj_n)=\Omega(n^{1/3})$.
Note that our estimate (Equation~\ref{eqn:rhoDISJP}) gives 
$\rho(\disj_k)=O(1/k)$. Thus by Proposition~\ref{prop:spectrad},
this only gives a very weak lower bound $Q(\disj_n)=\Omega(\log n)$.
Surprisingly, this weak bound can be amplified to $\Omega(n^{1/3})$
through the dual formulation of the approximate degree (Lemma~\ref{lm:dual}).
Finding more examples of such ``hardness amplification'' would
be very interesting.
\end{remark}

\section{Open problems and discussions}
While the block-composed functions we focus on are restricted
to have identical $g_k$ in each block, and $g_k$ has balanced
input size on Alice and Bob's side, our technique can be extended
straightforwardly to deal with non-identical, and general 
building block functions. Pushing this approach to its limit
in resolving the Log-Equivalence Conjecture is an interesting
direction. 

A specific problem is to minimize the technical assumption
on the block-size in the Main Lemma --- for some $g_k$, this can be
accomplished by using the result of Sherstov~\cite{Sherstov}, 
which we will describe below in more details.
Another specific problem is to prove the Log-Equivalence Conjecture
for $f_n\Box\wedge$, for an arbitrary $f_n$.

In an independent work, Sherstov~\cite{Sherstov} also 
derived Lemma~\ref{lm:dual}, and used it to prove strong 
quantum lower bounds on
what he called ``pattern matrices''. In our notation,
he considered functions $f_n\Box g^0_k$, where 
$f_n\in\mathcal{F}_n$ and $g^0_k:\{0, 1\}^k\times
([k]\times\bool)\to\bool$ is fixed with
$g^0_k(x, (i, b))\defeq x_i+b$. His main result is, $Q(f_n\Box g^0_k)
=\Omega(\appd(f_n))$ for any $f_n$. The proof 
also starts with the dual characterization of $\appd(f_n)$, 
constructs $q$ via Lemma~\ref{lm:dual},
then constructs a witness matrix $h$ (or $K$ in~\cite{Sherstov})
for the high trace norm of any matrix approximating $f_n\Box g_k^0$.
His construction of $h$ can be expressed in the same equation 
(Eqn.~\ref{eqn:hdef}) as ours with carefully chosen $\mu_0$ and $\mu_1$
for $g_k^0$.

The main technical difference takes place after Eqn.~(\ref{eqn:h}).
With the fixed $g_k^0$, the constructed $h$ has the 
nice property that the left and right eigenvectors of $(\mu_0+\mu_1)^{\otimes
\bar w}\otimes(\mu_0-\mu_1)^{\otimes w}$ are in orthogonal subspaces,
due to the fact that 
\begin{equation}\label{eqn:ortho}
(\mu_0+\mu_1)^T(\mu_0-\mu_1)=0,\quad\textrm{and,}\quad
(\mu_0-\mu_1)^T(\mu_0+\mu_1)=0.
\end{equation} 
Thus, he was able to avoid the use of the triangle inequality in Eqn.~(\ref{eqn:bound-h})
and replace the summation by the maximum. This sharper bound
moderates the requirement on $k$, and results in an alternative
proof for Razborov's lower bound with the same asymptotic parameters and
without using Hahn polynomials at all. 
In particular, he proved that
$Q(f_n\Box \disj_k)=\Omega(\appd(f_n))$ for any $f_n$ and any $k\ge4$.
This is a significantly stronger result than our requirement
that $k\ge \frac{6en}{\appd(f_n)}$ (Lemma~\ref{lm:disj}) when
$\appd(f_n)$ is much smaller than $n$.
On the other hand, for a general $g_k$, the best bound on $Q(f_n\Box g_k)$
provable through this method (i.e., using pairs of
$\mu_0$ and $\mu_1$ satisfying the orthogonality condition
(\ref{eqn:ortho})), is not necessarily
stronger than that in Main Lemma. This is because the orthogonality condition 
restricts the choice of $\mu_0$ and $\mu_1$ to smaller domains. 

\section{Acknowledgments} We thank Jianxin Chen, Sasha Razborov, Sasha Sherstov,
and Zhiqiang Zhang for useful discussions. We also thank Sasha Sherstov for sending
us his manuscript~\cite{Sherstov}.

%\bibliographystyle{abbrv}
%\bibliography{my}

\begin{thebibliography}{10}

\bibitem{Aaronson:2003:search}
Aaronson and Ambainis.
\newblock Quantum search of spatial regions (extended abstract).
\newblock In {\em Proceedings of the 44th IEEE Symposium on Foundations of
  Computer Science ({FOCS})}, pages 200--209, 2003.

\bibitem{ASTS+98a}
A.~Ambainis, L.~Schulman, A.~Ta-Shma, U.~Vazirani, and
A.~Wigderson.
\newblock Quantum communication complexity of sampling.
\newblock In {\em Proceedings of 39th Annual Symposium on the Foundations of
  Computer Science ({FOCS})}, pages 342--351, 1998.

\bibitem{Bar-Yossef:2004:com}
Z.~Bar-Yossef, T.~S. Jayram, and I.~Kerenidis.
\newblock Exponential separation of quantum and classical one-way communication
  complexity.
\newblock In {\em Proceedings of the thirty-sixth annual {ACM} Symposium on
  Theory of Computing ({STOC})}, pages 128--137, 2004. 

\bibitem{BBC+98}
R.~Beals, H.~Buhrman, R.~Cleve, M.~Mosca, and R.~de~Wolf.
\newblock Quantum lower bounds by polynomials.
\newblock In {\em Proceedings of the 39th Annual Symposium on Foundations of
  Computer Science}, pages 352--361, 1998. 

\bibitem{Beals:2001:QLB}
R.~Beals, H.~Buhrman, R.~Cleve, M.~Mosca, and R.~de~Wolf.
\newblock Quantum lower bounds by polynomials.
\newblock {\em Journal of the ACM}, 48(4):778--797, 2001.

\bibitem{BCWdW01a}
H.~Buhrman, R.~Cleve, J.~Watrous, and R.~de~Wolf.
\newblock Quantum fingerprinting.
\newblock {\em Phys. Rev. Lett.}, 87(16):167902, 2001.

\bibitem{BCW98a}
H.~Buhrman, R.~Cleve, and A.~Wigderson.
\newblock Quantum vs. classical communication and computation.
\newblock In {\em Proceedings of the 30th Annual ACM Symposium on Theory of
  Computing}, pages 63--68, 1998. 

\bibitem{Buhrman:2000:com}
H.~Buhrman and R.~de~Wolf.
\newblock Communication complexity lower bounds by polynomials.
\newblock In F.~M. Titsworth, editor, {\em Proceedings of the 16th Annual
  Conference on Computational Complexity ({CCC})}, pages 120--130, 2000. 

\bibitem{Buhrman:2002:CMD}
H.~Buhrman and R.~de~Wolf.
\newblock Complexity measures and decision tree complexity: a survey.
\newblock {\em Theoretical Computer Science}, 288(1):21--43, 2002.

\bibitem{Cleve}
R.~Cleve.
\newblock The query complexity of order-finding.
\newblock {\em Information and Computation}, 192(2):162--171, 2004.

\bibitem{Cleve+}
R.~Cleve, W.~van~Dam, M.~Nielsen, and A.~Tapp.
\newblock Quantum Entanglement and the Communication Complexity
		 of the Inner Product Function.
\newblock {\em Lecture Notes in Computer Science}, 1509:61--74, 1999.

\bibitem{Delsarte78}
P.~Delsarte.
\newblock Hahn polynomials, discrete harmonics and t-designs.
\newblock {\em SIAM Journal on Applied Mathematics}, 34(1):157--166, 1978.

\bibitem{Gavinsky06}
D.~Gavinsky, J.~Kempe, and R.~de~Wolf.
\newblock Exponential separation of quantum and classical one-way communication
  complexity for a Boolean function.
\newblock quant-ph/0607174.

\bibitem{Gavinsky:2005:comm}
D.~Gavinsky, J.~Kempe, O.~Regev, and R.~de~Wolf.
\newblock Bounded-error quantum state identification and exponential
  separations in communication complexity.
\newblock In {\em Proceedings of the 38th Symposium on Theory
of Computing (STOC)}, pages 594--603, 2006.

\bibitem{Hoyer:2002:IQC}
P.~H{\o}yer and R.~de~Wolf.
\newblock Improved quantum communication complexity bounds for disjointness and
  equality.
\newblock {\em Lecture Notes in Computer Science}, 2285:299--310, 2002.

\bibitem{Hamming}
W.~Huang, Y.~Shi, S.~Zhang, and Y.~Zhu.
\newblock The communication complexity of the Hamming Distance Problem. 
\newblock {\em Information Processing Letters}, 99(4):149--153, 2006.

\bibitem{Kerenidis06}
I.~Kerenidis and R.~Raz.
\newblock The one-way communication complexity of the boolean hidden matching
  problem.
\newblock quant-ph/0607173.

\bibitem{Klauck01}
H.~Klauck.
\newblock Lower bounds for quantum communication complexity.
\newblock In {\em Proceedings of the 42nd Annual Symposium
  on Foundations of Computer Science ({FOCS})}, pages 288--297, 2001.

\bibitem{Klauck:rectangle}
H.~Klauck.
\newblock Rectangle size bounds and threshold covers in communication
  complexity.
\newblock In {\em Proceedings of the 
18th IEEE Conference on Computation Complexity}, pages
  118--134, 2003.

\bibitem{Knuth91}
D.~Knuth.
\newblock Combinatorial matrices.
\newblock Manuscripts available at \url{http://www-cs-faculty.stanford.edu/~knuth/preprints.html}, 1991.

\bibitem{Kremer}
I.~Kremer.
\newblock {\em Quantum Communication}.
\newblock Master's Thesis, Computer Science Department, The Hebrew University of Jerusalem, 1995.

\bibitem{Kremer:1995:comm}
I.~Kremer, N.~Nisan, and D.~Ron.
\newblock On randomized one-round communication complexity.
\newblock In {\em Proceedings of the twenty-seventh annual ACM symposium on
  Theory of computing (STOC)}, pages 596--605, 1995.

\bibitem{Kushilevitz:1997:book}
E.~Kushilevitz and N.~Nisan.
\newblock {\em Communication Complexity}.
\newblock Cambridge University Press, Cambridge, 1997.

\bibitem{LeeSS08}
T.~Lee, A.~Shraibman, and R.~\v{S}palek.
\newblock Disjointness is hard in the multi-party number-on-the-forehead model.
\newblock To appear in {\em Proceedings of the 23rd Annual
  Conference on Computational Complexity ({CCC})}, 2008.

\bibitem{LinialS07}
N.~Linial and A.~Shraibman.
\newblock Lower bounds in communication complexity based on factorization norms.
\newblock {\em Proceedings of the Thirty-Ninth Annual {ACM} Symposium on the Theory of Computation (STOC)}, 
pages 669--708, 2007.

\bibitem{LinialS08}
N.~Linial and A.~Shraibman.
\newblock Learning complexity vs. communication complexity.
\newblock To appear in {\em Proceedings of the 23rd Annual
  Conference on Computational Complexity ({CCC})}, 2008.

\bibitem{NisanS92}
N.~Nisan and M.~Szegedy.
\newblock On the degree of {Boolean} functions as real polynomials.
\newblock In {\em Proceedings of the Twenty-Fourth Annual {ACM} Symposium on
  the Theory of Computing (STOC)}, pages 462--467, 1992.

\bibitem{Paturi}
R.~Paturi.
\newblock On the degree of polynomials that approximate symmetric {Boolean}
  functions (preliminary version).
\newblock In {\em Proceedings of the Twenty-Fourth Annual ACM Symposium on the
  Theory of Computing (STOC)}, pages 468--474, 1992.

\bibitem{Raz99a}
R.~Raz.
\newblock Exponential separation of quantum and classical communication
  complexity.
\newblock {\em Proceedings of the 31st ACM Symposium on Theory of Computing
(STOC)}, pages 358--367, 1999.

\bibitem{RazborovP}
A.~A. Razborov.
\newblock Personal communication, 2002.

\bibitem{Razborov}
A.~A. Razborov.
\newblock Quantum communication complexity of symmetric predicates ({R}ussian).
\newblock {\em Izvestiya: Mathematics}, 67(1):145--159, 2002.
\newblock English translation available at
  http://genesis.mi.ras.ru/~razborov/qcc\_eng.ps.

\bibitem{Sherstov07}
A.~A.~Sherstov.
\newblock Separating AC$^0$ from depth-2 majority circuits. 
\newblock In {\em Proceedings of the 39th Symposium on Theory
of Computing (STOC)}, pages 294--301, 2007.

\bibitem{Sherstov}
A.~A.~Sherstov.
\newblock The pattern matrix method for lower bounds on quantum communication.
\newblock Technical report TR-07-46,
The Univ. of Texas at Austin, Dept. of Comp. Sci., 6 Sept. 2007.

\bibitem{Simon}
D.~R.~Simon.
\newblock On the power of quantum computation.
\newblock {\em SIAM Journal on Computing}, 26(5):1474--1483, 1997.

\bibitem{Shor}
P.~W.~Shor.
\newblock Polynomial-time algorithms for prime factorization and discrete logarithms on a quantum computer.
\newblock {\em SIAM Journal on Computing}, 26(5):1484-1509, 1997.

\bibitem{Szegedy}
M.~Szegedy.
\newblock Personal communication, 2004.

\bibitem{Yao:1979:comm}
A.~C.-C. Yao.
\newblock Some complexity questions related to distributive computing.
\newblock In {\em Eleventh Annual {ACM} Symposium on Theory of Computing
  ({STOC})}, pages 209--213, 1979. 

\bibitem{Yao:1993:circuit}
A.~C.-C. Yao.
\newblock Quantum circuit complexity.
\newblock In {\em Proceedings of the 
34th Annual Symposium on Foundations of Computer Science (FOCS)},
pages 352--361, 1993.

\end{thebibliography}

\end{document}